\newcommand*{\prob}[1]{\textsc{#1}}                                                                 
\newcommand*{\probc}[1]{\textsf{#1}}                                                                
\newcommand*{\algo}[1]{\texttt{#1}}                                                                 
\newcommand*{\algoi}[2]{\texttt{#1}(#2)}                                                            
\DeclarePairedDelimiter{\brnX}{(}{)}
\DeclarePairedDelimiter{\brsX}{[}{]}
\DeclarePairedDelimiter{\brcX}{\{}{\}}
\DeclarePairedDelimiter{\absX}{\lvert}{\rvert}
\DeclarePairedDelimiter{\ceilX}{\lceil}{\rceil}
\newcommand*{\brn}{\brnX*}                                                                          
\newcommand*{\brs}{\brsX*}                                                                          
\newcommand*{\brc}{\brcX*}                                                                          
\newcommand*{\abs}{\absX*}                                                                          
\newcommand*{\ceil}{\ceilX*}                                                                        
\DeclareMathOperator{\Pop}{P}
\newcommand*{\Prs}[2]{\Pop_{#1}\brn{#2}}                                                            
\DeclareMathOperator{\ohop}{O}
\newcommand*{\Oh}[1]{\ohop\brn{#1}}                                                                 
\DeclareMathOperator{\Vop}{V}
\newcommand*{\V}[1]{\Vop\brn{#1}}                                                                   
\let\degd\deg
\let\deg\relax
\newcommand*{\deg}[1]{\degd\brn{#1}}                                                                
\newcommand*{\N}{\mathbb{N}}                                                                        
\newcommand*{\Z}{\mathbb{Z}}                                                                        
\newcommand*{\setb}[2]{\left\{#1 \mid #2\right\}}                                                   
\newcommand*{\dnimply}{\kern.6em\not\kern-.6em \implies}                                            
\newcommand*{\sgn}[1]{\sgn\brn{#1}}                                                                 
\newcommand*{\oper}[1]{\operatorname{#1}}                                                           
\newcommand{\ptitle}[1]{\gdef\prob@title{#1}}                                                       
\newcommand{\pobject}[1]{\gdef\prob@object{#1}}                                                     
\newcommand{\pquery}[1]{\gdef\prob@query{#1}}                                                       
\newcommand{\pparam}[1]{\gdef\prob@param{#1}}                                                       
  \def\prob@type{#1}
  \def\prob@topt{opt}%
  \def\prob@title{}%
  \def\prob@object{}%
  \def\prob@query{}%
  \def\prob@param{}%
    \def\prob@qword{Solution}%
    \def\prob@qword{Question}%
    \def\prob@trow{{\large\prob@title}}
    \def\prob@trow{
      \begin{tabular*}{\textwidth}{@{\extracolsep{\fill}}lr}%
        {\large\prob@title} & \textbf{Parameter:}~\prob@param%
      \end{tabular*}%
    }%
\title{Sublinear-Space Approximation Algorithms for Max $r$-SAT}
\titlerunning{Sublinear-Space Max $r$-SAT Approximation} 
\institute{The Institute of Mathematical Sciences, HBNI, Chennai, India\\\email{\{barindam,vraman\}@imsc.res.in}}
\author{Arindam Biswas \and Venkatesh Raman}
\authorrunning{A.\ Biswas and V.\ Raman}
\newcommand*{\pMaxrSAT}{\prob{Max \textnormal{\textit{r}}-SAT}}
\newcommand*{\pPlMaxrSAT}{\prob{Planar Max \textnormal{\textit{r}}-SAT}}
\newcommand*{\pLP}{\prob{Linear Programming}}
\newcommand*{\cP}{\probc{P}}
\newcommand*{\cNP}{\probc{NP}}
\begin{document}
\maketitle
\begin{abstract}
In the $\pMaxrSAT{}$ problem, the input is a CNF formula with $n$ variables where each clause is a disjunction of at most $r$ literals. The objective is to compute an assignment which satisfies as many of the clauses as possible. While there are a large number of polynomial-time approximation algorithms for this problem, we take the viewpoint of space complexity following [Biswas et al., Algorithmica 2021] and design sublinear-space approximation algorithms for the problem.

We show that the classical algorithm of [Lieberherr and Specker, JACM 1981] can be implemented to run in $n^{\Oh{1}}$ time while using $\Oh{\log{n}}$ bits of space. The more advanced algorithms use linear or semi-definite programming, and seem harder to carry out in sublinear space. We show that a more recent algorithm with approximation ratio $\sqrt{2}/2$ [Chou et al., FOCS 2020], designed for the streaming model, can be implemented to run in time $n^{\Oh{r}}$ using $\Oh{r \log{n}}$ bits of space. While known streaming algorithms for the problem approximate optimum \textit{values} and use randomization, our algorithms are deterministic and can output the approximately optimal assignments in sublinear space.

For instances of $\pMaxrSAT{}$ with planar incidence graphs, we devise a factor-$(1 - \epsilon)$ approximation scheme which computes assignments in time $n^{\Oh{r / \epsilon}}$ and uses $\max\brc{\sqrt{n} \log{n}, (r / \epsilon) \log^2{n}}$ bits of space.

\keywords{Max SAT \and approximation \and sublinear space \and space-efficient \and memory-efficient \and planar incidence graph}
\end{abstract}

\section{Introduction, Motivation and Our Results}
Starting in the 70's, there has been a long line of work on the approximation properties of $\cNP{}$-hard problems. The classical approach has been to obtain better-than-trivial approximations for such problems with polynomial-time algorithms. Later on, a number of such problems were also studied in the streaming model of computation, where an algorithm must read the input in a fixed (possibly adversarial) sequence. The goal is typically to compute an approximation by making a constant number of passes over the input using space sublinear in the input size. Recently, there has been some interest in studying approximation problems in the sublinear-space RAM model, a model halfway between the RAM and streaming models of computation. In this paper, we continue the work initiated in \cite{BRS2021Algorithmica} and devise sublinear-space approximation algorithms for $\pMaxrSAT{}$.

An instance of $\pMaxrSAT{}$ is a CNF formula $F = C_1 \wedge \dotsb \wedge C_m$, where each of the clauses $C_1, \dotsc, C_m$ is a disjunction of at most $r$ literals over a variable set $\brc{x_1, ..., x_n}$. The objective is to compute an assignment which satisfies as many of the clauses as possible. Viewing the variables and clauses as an incidence structure yields an incidence graph where clauses and variables are vertices, and there is an edge between a variable $x$ and a clause $C$ whenever $x$ appears in $C$. We call the restriction of $\pMaxrSAT{}$ to instances with planar incidence graphs $\pPlMaxrSAT{}$.

The classical approximation algorithm~\cite{Joh1974JCSS} for $\pMaxrSAT{}$ achieves an approximation ratio of $1 / 2$ (shown to be $2 / 3$ in~\cite{CFZ1999JCSS}). Later on, the ratio was improved to $(\sqrt{5} - 1) / 2$ in \cite{LS1981JACM}. Our first observation is that these ratios can be achieved using logarithmic space. Algorithms computing $(3/4)$-approximations are known~\cite{GW1994SIDMA}, but they use linear or semi-definite programming. Under logarithmic-space reductions, it is $\cP{}$-complete to approximate $\pLP{}$ to any constant factor~\cite{Ser1991IPL}. In Section~\ref{sect:general}, we show that the previously mentioned factor-$((\sqrt{5} - 1) / 2)$ and a more recent factor-$(\sqrt{2} / 2)$ approximation algorithm~\cite{CGV2020FOCS}, devised for the streaming model, can be implemented to use $\Oh{\log{n}}$ bits of space.
.

For $\pPlMaxrSAT{}$, it is possible to compute factor-$(1 - \epsilon)$ approximations in polynomial time for any constant $\epsilon > 0$~\cite{KM1996STOC}. In Section~\ref{sect:planar}, we give a sublinear-space implementation of this scheme using recent results about computing tree decompositions~\cite{EJT2010FOCS} and BFS traversal sequences~\cite{AKNW2014MFCS}.

\subsubsection*{The Model.} We use the standard RAM model and additionally constrain the amount of space available to be sublinear in the input size. The input to an algorithm is provided using some canonical representation, which it can read but not modify, i.e.\ it has read-only access to the input. It also has read-write access to a certain amount of auxiliary space. Output is written to a stream: once something is output, the algorithm cannot read it back at a later point as it executes. We count the amount of auxiliary space in single-bit units, and the objective is to use as little auxiliary space as possible.

\subsubsection*{Related Work.}
In the RAM model, earlier works with an emphasis on space efficiency include reachability~\cite{Sav1970JCSS,BBRS1998SICOMP,Rei2008JACM}, sorting and selection~\cite{MP1980TCS,Fre1987JCSS,MR1996TCS} and graph recognition~\cite{Rei1984JACM,AM2004InfComput,EK2014STOC}. In recent years, new results on the computability of separators for planar graphs in sublinear space have been used to devise sublinear-space algorithms for BFS~\cite{AKNW2014MFCS} and DFS~\cite{IO2020ICALP} with better running times than algorithms for general graphs.

\subsubsection*{Results.}
We study the question of what approximations may be achieved when the amount of space available to an algorithm is sublinear in the input size. Our model being more relaxed than the streaming model, we are able to compute approximately optimal assignments for $\pMaxrSAT{}$ instead of approximating optimum values. On the other hand, our model is more restrictive than the RAM model of classical approximation algorithms where the amount of space used by an algorithm can potentially be polynomially large in the input size.

\begin{itemize}
    \item For general $\pMaxrSAT{}$ (Section~\ref{sect:general}), we convert a classical algorithm of Lieberherr and Specker~\cite{LS1981JACM} to our model, obtaining a $((\sqrt{5} - 1) / 2)$-approximation algorithm which uses $\Oh{\log{n}}$ bits of space. We also convert a more recent algorithm of Chou et al.~\cite{CGV2020FOCS} to obtain a $(\sqrt{2} / 2)$-approximation algorithm which uses $\Oh{r \log{n}}$ bits of space.

    \item For $\pPlMaxrSAT{}$ (Section~\ref{sect:planar}), we show how a $(1 - \epsilon)$-approximation scheme of Khanna and Motwani~\cite{KM1996STOC} can be implemented to use $\max\brc{\sqrt{n}, (r / \epsilon) \log{n}}$ bits of space.
\end{itemize}

\section{Preliminaries}
In this paper, we use the following standard notation and concepts. The set $\brc{0, 1, \dotsc}$ of natural numbers is denoted by $\N$ and the set $\brc{1, 2, \dotsc}$ of positive integers is denoted by $\Z^+$. For $n \in \Z^+$, $[n]$ denotes the set $\brc{1, 2, \dotsc, n}$.

An $r$-CNF formula is a conjunction (OR) of disjunctions (AND) of at most $r$ literals (variables or their negations). The individual disjunctions are called clauses of the formula. A clause that consists of a single literal is called a unit clause. For $k \in [n]$, a $k$-clause is a clause which contains exactly $k$ literals.

Let $F$ be a CNF formula with variables $x_1, \dotsc, x_n$. An assignment for $F$ is a function $\phi: [n] \to \brc{0, 1}^n$. The assignment is said to satisfy a clause in $F$ if setting $x_i = \phi{1}\ (i \in [n])$ makes some literal in the clause evaluate to $1$. If $\phi$ satisfies all clauses in $F$, it is said to satisfy $F$. 

\subsection{Time and Space Overheads}\label{ssct:ts_overhead}
In proofs, we measure resource costs in terms of overheads for individual steps. Since the space available to an algorithm is limited, objects created by processing the input are not stored, but recomputed on the fly. For example, consider a procedure (call it $\algo{A}$) that reads an input formula $F$ and produces a subformula $F'$ consisting of the unit clauses of $F$. The procedure outputs $F'$ as a stream $S_{F'}$. Later on, when another procedure (call it $\algo{B}$) reads a portion of $S_{F'}$, $\algo{A}$ recomputes the entire stream $S_{F'}$. Suppose the resource costs of $\algo{A}$ are $t_{\algo{A}}$ time and $s_{\algo{A}}$ space, and assuming $\Oh{1}$-time read costs, suppose the resource costs of $\algo{A}$ are $t_{\algo{B}}$ time and $s_{\algo{B}}$ space.

In this scenario, we call $t_{\algo{B}}$ and $s_{\algo{B}}$ the resource overhead of $\algo{B}$. Combining this overhead with resource costs of $\algo{A}$, we obtain the actual resource costs of $\algo{B}$: $t_{\algo{B}} \cdot t_{\algo{A}}$ time and $s_{\algo{B}} + s_{\algo{A}}$ space.

\subsection{Universal Hash Families}\label{ssct:uhash}
Algorithms appearing later on use the trick of randomized sampling to show that certain good assignments exist and then derandomize the procedure by using a $k$-\emph{universal} family of functions. A $k$-\emph{universal hash} family is a family $\mathcal {H}$ of functions from $[n]$ to $[b]$, for positive integers $n, k, b$ with $n \geq b, k$, such that for random variables $X_i$ ($i \in [n]$) defined as $X_i = f(i)$ with $f$ sampled uniformly at random from $\mathcal{H}$ (denoted $f \sim \mathcal{H}$), the probability---for any $S \subseteq [n]$ with $\abs{S} = k$ and any $a_i \in [l]\ (i \in S)$---of the event $(\bigwedge_{i \in S} X_i = a_i)$ is $1 / b^k$. This condition implies in particular that $X_1, \dotsc, X_n$ are $k$-wise independent and the probability of the event $(X_i = a_i)$ is $1 / b$. 

Let $a \leq b$ be a positive integer, and consider the function $\phi: [b] \to \brc{0, 1}$ defined by $\phi(x) = 1$ if $x \leq a$ and $\phi(x) = 0$ otherwise. With $f \sim \mathcal{H}$ and $Y_1, \dotsc, Y_n$ defined as $Y_i = \phi(f(i))$, it is easy to see that $\Prs{f \sim \mathcal{H}}{Y_i = 1} = a / b$, and by the $k$-universality of $\mathcal{H}$, the variables $Y_1, \dotsc, Y_n$ are $k$-wise independent. Note that $\setb{\phi \circ f}{f \in \mathcal{H}}$ is in fact a $k$-universal hash family. With access to $f \in \mathcal{H}$, the composition $\phi \circ f$ can be computed using $\Oh{\log{n}}$ bits of extra space. 

It is known that $k$-universal hash families such as $\mathcal{H}$ exist~\cite{FKS1984JACM} and can be computed in time $n^{\Oh{k}}$ using $\Oh{k \log{n}}$ bits of space. The following proposition is a combination of those results and the preceding discussion.

\begin{proposition}[Fredman et al.~\cite{FKS1984JACM}]\label{prop:uhash}
	Let $n, k, a, b \in \Z^+$ with $n \geq b \geq a$ and $n \geq k$. One can enumerate a $k$-universal hash family $\oper{Univ}(n, k, a, b)$ for $\brs{[n] \to \brc{0, 1}}$ in time $n^{\Oh{k}}$ using $\Oh{k \log{n}}$ bits of space.	
\end{proposition}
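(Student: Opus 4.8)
The plan is to instantiate the classical polynomial construction of $k$-wise independent families and then apply the threshold composition from the discussion above. First I would fix a finite field $\F_q$ whose order $q$ is a prime power with $q \geq n$ and $b \mid q$; when $b$ is a prime power (as in the applications, where $b$ is typically a power of two) this is immediate, taking $q$ to be a sufficiently large power of the prime dividing $b$, and such a $q$ is found in $\Oh{\log n}$ space. Identifying $[n]$ with a subset of $\F_q$, the family $\mathcal{G}$ is the set of all polynomials over $\F_q$ of degree at most $k - 1$, each member named by its coefficient vector in $\F_q^k$.

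Next I would verify that $\mathcal{G}$ is $k$-universal with range $\F_q$. The key fact is that a polynomial of degree $< k$ is determined by its values at any $k$ distinct points (Vandermonde invertibility, equivalently Lagrange interpolation), so for any fixed $k$ distinct inputs the evaluation map $\F_q^k \to \F_q^k$ sending a coefficient vector to its vector of values is a bijection. Hence the $k$ outputs are uniform and independent over $\F_q$, giving probability $1 / q^k$ for every prescribed pattern. Partitioning $\F_q$ into $b$ equal blocks of size $q / b$ (possible since $b \mid q$) converts this into a $k$-universal family $\mathcal{H} \colon [n] \to [b]$ with exact uniform marginals $1 / b$, and composing each $f \in \mathcal{H}$ with the threshold $\phi$ of the preceding discussion yields the family $\oper{Univ}(n, k, a, b)$ for $\brs{[n] \to \brc{0, 1}}$ in which the bias $\Pr{Y_i = 1} = a / b$ is held exactly.

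For the resource bounds I would argue as follows. Enumerating $\oper{Univ}(n, k, a, b)$ amounts to iterating over the $q^k = n^{\Oh{k}}$ coefficient vectors; maintaining the current index together with a single coefficient vector costs $\Oh{k \log q} = \Oh{k \log n}$ bits, and producing the value of a given member at any $x \in [n]$ takes $\Oh{k}$ field operations by Horner's rule, i.e.\ polynomial time. Passing the result through the block partition and the threshold $\phi$ adds only $\Oh{\log n}$ bits, as already noted. Summing, the enumeration runs in $n^{\Oh{k}}$ time and $\Oh{k \log n}$ space.

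The main obstacle is the insistence on \emph{exact} marginals $a / b$: the bare polynomial construction is uniform only over $\F_q$, so a naive reduction of outputs modulo $b$ would leave an $\Oh{b / q}$ bias rather than exactly $a / b$. The clean fix is to force $b \mid q$, which is immediate for prime-power $b$ and makes the block partition of $\F_q$ exactly balanced; for a fully general $b$ one would instead combine one such field per prime-power factor of $b$ via the Chinese Remainder Theorem, and the care there is to keep the total seed length within $\Oh{k \log n}$. Checking that a suitable $q \geq n$ is found in $\Oh{\log n}$ space, and---per Section~\ref{ssct:ts_overhead}---that the enumeration never stores more than the current coefficient vector while recomputing member values on demand, is where the remaining work lies.
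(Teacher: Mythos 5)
Your polynomial construction (degree-$(k-1)$ polynomials over $\F_q$, Vandermonde/Lagrange bijectivity, equal-block projection onto $[b]$, then the threshold $\phi$) is the standard route, and it is correct precisely when $b$ divides some prime power $q \geq n$, i.e.\ when $b$ is itself a prime power. Note that the paper offers no proof of this proposition: it imports the family from Fredman et al.\ and the threshold composition is already laid out in Section~\ref{ssct:uhash}, so reconstructing the cited black box as you do is the right instinct, and your threshold step coincides with the paper's own discussion. The problem lies in the part the paper delegates to the citation.

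The genuine gap is the case of general $b$, which is both what the statement asserts ($a \leq b \leq n$ arbitrary) and what the paper actually uses: Theorem~\ref{thrm:gen_gold} invokes $\oper{Univ}(n, 2, 618, 1000)$ with $b = 1000 = 2^3 \cdot 5^3$, and Theorem~\ref{thrm:gen_sqrt2} invokes $\oper{Univ}(n, r, \ceil{m - b_F}, \ceil{2m - 4 b_F})$ where $b$ has no special structure; so your parenthetical ``as in the applications, where $b$ is typically a power of two'' is false, and the prime-power case cannot carry the proof. Your fallback---CRT across the prime-power factors of $b$---does not meet the stated bounds in general: exactness forces a separate field of size $q_j \geq n$ with $q_j$ a power of $p_j$ for each distinct prime $p_j \mid b$, so the product family has size $\prod_j q_j^k \geq n^{kt}$ and seed length $\Om{t k \log{n}}$, where $t$ is the number of distinct prime divisors of $b$. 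Since $t$ can be as large as $\thet{\log{n} / \log\log{n}}$, this is $n^{\om{k}}$ time and $\om{k \log{n}}$ bits of space rather than $n^{\Oh{k}}$ and $\Oh{k \log{n}}$. (For $b = 1000$, where $t = 2$, CRT is fine; it is the fully general claim that remains unproved.) Closing the gap requires either a construction with exact bias $a/b$ whose cost does not multiply across prime factors, or weakening the proposition to prime-power $b$ (or $t = \Oh{1}$) and re-deriving the applications with a nearby admissible bias---e.g.\ the slack between $0.618$ and $(\sqrt{5}-1)/2$ tolerates replacing $618/1000$ by a dyadic rational---which is a change to the statement and its uses, not the routine verification your last paragraph suggests.
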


\section{$\pMaxrSAT{}$}\label{sect:general}
In this section, we devise sublinear-space $(\sqrt{5} - 1) / 2)$- and $(\sqrt{2} / 2)$-approximation algorithms for $\pMaxrSAT{}$, with the former's time and space costs being independent of $r$. The following folklore result gives a straightforward linear-time, logarithmic-space $(1 / 2)$-approximation.

\begin{proposition}[Folklore]
    For any $r$-CNF formula, either the all-$1$'s or the all-$0$'s assignment satisfies at least half the clauses.
\end{proposition}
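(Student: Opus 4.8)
The plan is to use a simple averaging/counting argument over the two complementary assignments. Consider any $r$-CNF formula $F = C_1 \wedge \dotsb \wedge C_m$ and the two assignments $\mathbf{1}$ (all-$1$'s) and $\mathbf{0}$ (all-$0$'s). The key observation is that these two assignments are complementary: under $\mathbf{1}$, every variable is set to $1$ and every negated variable to $0$, while under $\mathbf{0}$ the situation is exactly reversed. I would first establish the crucial claim that \emph{every} clause is satisfied by at least one of these two assignments — in fact by the complementary pair — because a clause is a disjunction of literals, and the only way it can fail to be satisfied is if every one of its literals evaluates to $0$.

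**Next I would** make this precise by showing that no clause can be falsified by both $\mathbf{1}$ and $\mathbf{0}$ simultaneously. Fix a clause $C$. If $C$ contains at least one positive literal $x_i$, then under $\mathbf{1}$ that literal evaluates to $1$, so $C$ is satisfied by $\mathbf{1}$. If $C$ contains at least one negative literal $\overline{x_j}$, then under $\mathbf{0}$ that literal evaluates to $1$, so $C$ is satisfied by $\mathbf{0}$. Since every nonempty clause contains at least one literal, which is either positive or negative, each clause is satisfied by at least one of the two assignments. Hence, letting $s_{\mathbf{1}}$ and $s_{\mathbf{0}}$ denote the number of clauses satisfied by $\mathbf{1}$ and $\mathbf{0}$ respectively, every clause is counted at least once, so $s_{\mathbf{1}} + s_{\mathbf{0}} \geq m$.

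**To finish**, the averaging step: since $s_{\mathbf{1}} + s_{\mathbf{0}} \geq m$, at least one of $s_{\mathbf{1}}, s_{\mathbf{0}}$ must be at least $m / 2$, for otherwise their sum would be strictly less than $m$. Therefore whichever of the two assignments satisfies more clauses satisfies at least half of them, which is exactly the claim. I do not anticipate any real obstacle here; the only point requiring a modicum of care is the degenerate case of an empty clause (a clause with zero literals), which is satisfied by neither assignment — but such a clause is unsatisfiable under \emph{any} assignment and is excluded by the standard convention that clauses are nonempty disjunctions, so it does not affect the bound. The argument is entirely combinatorial and requires no computation.
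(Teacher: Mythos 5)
Your proof is correct: the observation that every nonempty clause contains either a positive literal (satisfied by the all-$1$'s assignment) or a negative literal (satisfied by the all-$0$'s assignment), so $s_{\mathbf{1}} + s_{\mathbf{0}} \geq m$, is exactly the standard folklore argument, and your handling of the empty-clause edge case is appropriate. The paper states this proposition without proof (citing folklore), and your argument is precisely the one it implicitly relies on.
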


\subsection{Factor-$((\sqrt{5} - 1) / 2)$ Approximation Algorithm}
In what follows, we give a logarithmic-space implementation of the following result.
\begin{proposition}[Lieberherr and Specker~\cite{LS1981JACM}, Theorem 1]
    Let $F$ be an $r$-CNF formula with $m$ clauses. There is an assignment for $F$ which satisfies at least $(\sqrt{5} - 1) m / 2$ clauses.
\end{proposition}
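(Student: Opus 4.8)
The plan is to produce the assignment by analysing a single biased random experiment and then appeal to linearity of expectation. Let $\alpha = (\sqrt{5} - 1)/2$ be the positive root of $t^2 + t - 1 = 0$, so that the golden-ratio identity $1 - \alpha = \alpha^2$ holds. Independently set each variable to $1$ with probability $p = \alpha$ and to $0$ with probability $1 - p = \alpha^2$. The value $\alpha$ is exactly the bias that equalises the satisfaction probabilities of the two extreme clause types: a positive unit clause (satisfied with probability $p$) and an all-negative $2$-clause (satisfied with probability $1 - p^2$); solving $p = 1 - p^2$ forces $p = \alpha$, which is what singles out the golden ratio.

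The first step is to bound clauses of length at least two, uniformly in their signs. A clause with $a$ positive and $b$ negative literals is left unsatisfied exactly when every one of its literals evaluates to $0$, an event of probability $(1 - \alpha)^{a} \alpha^{b} = \alpha^{2a + b}$, using $1 - \alpha = \alpha^2$. Since $2a + b \geq a + b$ and $\alpha < 1$, this is at most $\alpha^{a+b}$, which for length $a + b \geq 2$ is at most $\alpha^{2} = 1 - \alpha$. Hence every clause of length at least two is satisfied with probability at least $\alpha$, regardless of the signs of its literals. The same calculation with $a + b = 1$ shows a positive unit clause is satisfied with probability exactly $\alpha$.

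The delicate step — and the one I expect to be the main obstacle — is the unit clauses, since a negative unit clause is satisfied only with probability $1 - \alpha < \alpha$ under this bias. The plan is to normalise signs before running the experiment: replacing a variable $x_i$ by its negation throughout $F$ preserves the length of every clause (so the length-$\geq 2$ bound above is unaffected) while flipping the polarity of every unit clause on $x_i$. I would use this freedom to orient the unit clauses so that each contributes probability $\alpha$ rather than $1 - \alpha$. The care needed arises exactly when a single variable carries unit clauses of both polarities, where the sign flip cannot make all of them positive simultaneously; this is the point at which the argument must account separately for the mutually unsatisfiable unit clauses, and it is the crux of the whole estimate.

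With the signs normalised, linearity of expectation gives that the expected number of satisfied clauses is at least $\alpha m = (\sqrt{5} - 1)m/2$, so at least one assignment in the support meets the bound, which establishes existence. Looking ahead to the algorithmic sections, I would note that this expectation argument can be derandomised by the method of conditional expectations, or realised over a small sample space using a $k$-universal hash family as in Proposition~\ref{prop:uhash}; it is the latter that will make the logarithmic-space implementation possible.
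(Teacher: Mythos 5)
Your random experiment, the golden-ratio identity $1-\alpha=\alpha^2$, and the uniform bound for clauses of length at least two are all correct, and they are essentially the argument behind the paper's Proposition~\ref{prop:gen_gold}. The genuine gap is the one you yourself flag and then step over: after calling the case of a variable that carries unit clauses of both polarities ``the crux of the whole estimate,'' you continue with ``with the signs normalised\dots'' without ever resolving that case. It cannot be resolved, because for the statement as quoted the claim is false: take $F = x \wedge \neg x$, so $m = 2$, every assignment satisfies exactly one clause, and $1 < (\sqrt{5}-1)\cdot 2/2 \approx 1.236$. The missing ingredient is the hypothesis of $2$-satisfiability (no pair $(l,\neg l)$ occurs among the clauses), which is present in Lieberherr and Specker's actual Theorem~1 but omitted from the paper's restatement; the paper reinstates it immediately afterwards via the definition of $2$-satisfiability and Proposition~\ref{prop:gen_gold}. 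Under that hypothesis no variable has unit clauses of both polarities, your sign normalization always succeeds, and your linearity-of-expectation computation then yields a complete and correct proof of the ($2$-satisfiable) theorem.

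It is also worth seeing how the paper bridges from the $2$-satisfiable case back to arbitrary formulas, since that is exactly the case your argument cannot cover: in Theorem~\ref{thrm:gen_gold}, the $(l,\neg l)$ unit-clause pairs are stripped out, the $0.618\,m'$ bound is applied to the remaining $2$-satisfiable formula $F'$, and one observes that any assignment satisfies exactly half of the stripped pairs; this turns the existence bound into a $0.618$-approximation \emph{relative to the optimum}, not a guarantee of $0.618\,m$ satisfied clauses. A second, smaller difference: your analysis uses fully ($r$-wise) independent variable assignments, whereas Proposition~\ref{prop:gen_gold} asserts the same expectation under mere pairwise independence. That strengthening matters for the algorithmic sections: it permits derandomization with a $2$-universal family in time $n^{\Oh{1}}$ using $\Oh{\log{n}}$ bits of space, independent of $r$, whereas derandomizing your $r$-wise analysis via Proposition~\ref{prop:uhash} would cost $n^{\Oh{r}}$ time.
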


\begin{definition}[$2$-Satisfiability]
	An $r$-CNF formula $F$ is called $2$-satisfiable if any two of its clauses can be simultaneously satisfied, i.e.\ $F$ does not contain a pair $(l, \neg l)$ of literals as clauses.
\end{definition}

The following proposition is based on arguments in \cite{LS1981JACM} (see also~\cite{WS2011book}).
\begin{proposition}\label{prop:gen_gold}
	Let $F$ be a $2$-satisfiable $r$-CNF formula with $m$ clauses in which all unit clauses are positive literals. For the pairwise-independent random assignment where each variable of $F$ is set to $1$ with probability $p = 0.618 \approx (\sqrt{5} - 1) / 2$, the expected number of satisfied clauses is $0.618 m$.
\end{proposition}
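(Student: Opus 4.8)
The plan is to bound, for each clause separately, the probability that it is satisfied and then sum these bounds by linearity of expectation. Everything hinges on the defining identity of the golden ratio: $p = 0.618 \approx (\sqrt 5 - 1)/2$ is the positive root of $p + p^2 = 1$, equivalently $1 - p = p^2$. Writing $Z_C$ for the indicator that a clause $C$ is satisfied by the random assignment, the number of satisfied clauses is $\sum_C Z_C$, so $\Ex{\sum_C Z_C} = \sum_C \Pr{C \text{ satisfied}}$, and it suffices to prove $\Pr{C \text{ satisfied}} \geq p$ for every clause $C$; summing over all $m$ clauses then gives $pm = 0.618\,m$.

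First I would dispose of the unit clauses. By hypothesis each is a positive literal $x_i$, satisfied exactly when $x_i = 1$, i.e.\ with probability $p$. This is precisely where the hypothesis is used: a negative unit clause $\neg x_i$ would be satisfied only with probability $1 - p = p^2 < p$, and the bound would fail. ($2$-satisfiability is the normalization guaranteeing that, after flipping variables if necessary, all unit clauses can be taken positive.)

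For a clause $C$ with at least two literals the difficulty is that the assignment is only pairwise independent, so for $\abs{C} \geq 3$ one cannot compute the exact satisfaction probability. The remedy, and the crux of the argument, is to inspect only two literals: fix any two literals $l, l'$ of $C$ on distinct variables (a non-tautological clause of size $\geq 2$ always admits such a pair). Since $C$ is satisfied whenever one of $l, l'$ evaluates to $1$, we get $\Pr{C \text{ satisfied}} \geq 1 - \Pr{l = 0 \wedge l' = 0} = 1 - \Pr{l = 0}\Pr{l' = 0}$ by pairwise independence. A positive literal equals $0$ with probability $1 - p = p^2$ and a negative literal with probability $p$, so the product $\Pr{l = 0}\Pr{l' = 0}$ is $p^4$, $p^3$, or $p^2$ according as $l, l'$ are both positive, mixed, or both negative; in every case it is at most $p^2$, whence $\Pr{C \text{ satisfied}} \geq 1 - p^2 = p$.

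The main obstacle is thus the restriction to pairwise independence, which rules out a direct product over all the literals of a clause and forces the two-literal projection above; the tight case is the $2$-clause $\neg x \vee \neg y$ of two negative literals on distinct variables, where the bound holds with equality and the identity $1 - p = p^2$ is used in full. Combining the two cases gives $\Pr{C \text{ satisfied}} \geq p$ for every clause, and summing over the $m$ clauses yields an expected number of satisfied clauses of at least $0.618\,m$, as claimed.
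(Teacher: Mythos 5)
Your proof is correct: the paper states Proposition~\ref{prop:gen_gold} without proof (deferring to the cited arguments of Lieberherr--Specker and the treatment in Williamson--Shmoys), and your argument---linearity of expectation, the exact probability $p$ for positive unit clauses, and the two-literal projection $1 - \Pr{l = 0}\Pr{l' = 0} \geq 1 - p^2 = p$ using pairwise independence and the golden-ratio identity---is precisely that standard argument, correctly identifying the worst cases (positive unit clauses and all-negative $2$-clauses). Note only that the conclusion should be read as ``at least $0.618\,m$ in expectation,'' which is what your proof establishes and what the algorithm in Theorem~\ref{thrm:gen_gold} actually needs.
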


We now show how the above proposition can be used to compute $0.618$-approximate optimal $\pMaxrSAT{}$ assignments for general $r$-CNF formulas in logarithmic space.

\begin{theorem}\label{thrm:gen_gold}
	For any instance of $\pMaxrSAT{}$ with $n$ variables, one can compute a $0.618$-approximate optimal assignment in time $n^{\Oh{1}}$ using $\Oh{\log{n}}$ bits of space.
\end{theorem}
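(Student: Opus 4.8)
The plan is to reduce an arbitrary $r$-CNF formula $F$ to the situation covered by Proposition~\ref{prop:gen_gold}, and then to derandomise the pairwise-independent assignment it supplies by enumerating a $2$-universal family from Proposition~\ref{prop:uhash}. I would first preprocess $F$ into a formula that is $2$-satisfiable and whose unit clauses are all positive, so that the hypotheses of Proposition~\ref{prop:gen_gold} hold, and then argue that scanning the family while keeping the best assignment yields a $0.618$-approximation. The guiding principle (Section~\ref{ssct:ts_overhead}) is that none of the transformed objects are materialised: membership queries such as ``is variable $x_i$ flipped?'' or ``is clause $C$ part of a conflicting pair?'' are answered by re-scanning the read-only input, incurring extra time but only $\Oh{\log{n}}$ additional space.

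For the preprocessing, I would call a variable $x_i$ \emph{flipped} when $\neg x_i$ occurs as a unit clause of $F$ but $x_i$ does not, and let $\tilde F$ be $F$ with the polarity of every flipped variable reversed throughout. Flipping is a relabelling, hence a bijection on assignments that preserves the number of satisfied clauses; in particular $\oper{OPT}(\tilde F) = \oper{OPT}(F)$, and a best-count assignment for $\tilde F$ un-flips to one for $F$. A pair of unit clauses $x_i, \neg x_i$ (present in $\tilde F$ exactly when present in $F$) is a \emph{conflicting pair}: under every assignment exactly one of its two clauses is satisfied. So if $q$ is the number of such pairs and $\widetilde{F'}$ is $\tilde F$ with the $2q$ conflicting unit clauses removed, then $\oper{sat}_{\tilde F}(\psi) = \oper{sat}_{\widetilde{F'}}(\psi) + q$ for every $\psi$, whence $\oper{OPT}(\tilde F) = \oper{OPT}(\widetilde{F'}) + q$. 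By construction $\widetilde{F'}$ is $2$-satisfiable with all unit clauses positive, so Proposition~\ref{prop:gen_gold} applies: for the pairwise-independent assignment setting each variable to $1$ with probability $0.618$, the expected number of satisfied clauses of $\widetilde{F'}$ is $0.618 m'$, where $m' = m - 2q$.

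Combining, the expected value of $\oper{sat}_{\tilde F}$ over $f \sim \mathcal H$ is at least $0.618 m' + q$, so some member $f^{*}$ of the family attains $\oper{sat}_{\tilde F}(\tilde\psi_{f^{*}}) \ge 0.618 m' + q$, where $\tilde\psi_f(i) = \phi(f(i))$. Since $\oper{OPT}(\widetilde{F'}) \le m'$ and $q \ge 0.618 q$, this dominates $0.618\,\oper{OPT}(F) = 0.618\brn{\oper{OPT}(\widetilde{F'}) + q} \le 0.618 m' + q$, so un-flipping $\tilde\psi_{f^{*}}$ gives a $0.618$-approximate assignment for $F$. To realise this I would instantiate Proposition~\ref{prop:uhash} with $k = 2$ and $a / b = 309 / 500 = 0.618$ (so $\phi$ thresholds at $a$), which needs $n \ge b = 500$; the finitely many smaller instances are dispatched by brute force. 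In one pass I enumerate $\oper{Univ}(n, 2, 309, 500)$ and, for each $f$, count the clauses of $\tilde F$ satisfied by $\tilde\psi_f$ by streaming through $F$ and evaluating each literal through $\phi(f(i))$ together with an on-the-fly flip query, retaining the index and count of the best $f$; a second pass re-enumerates to that index and streams out the un-flipped assignment.

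The resource bounds then follow: enumeration costs $n^{\Oh{1}}$ time and $\Oh{\log{n}}$ space by Proposition~\ref{prop:uhash} (as $k = 2$ is constant), each of the $n^{\Oh{1}}$ functions is scored in polynomial time (using $m = n^{\Oh{1}}$, valid for constant $r$), and the counters, the best-index register, and the $\phi \circ f$ evaluation each fit in $\Oh{\log{n}}$ bits. I expect the main obstacle to be the bookkeeping of this implicit preprocessing: making the flip rule and the conflicting-pair accounting well-defined and recomputable from the read-only input within $\Oh{\log{n}}$ space, and verifying that the split into $q$ always-satisfied clauses and a $2$-satisfiable positive-unit remainder preserves the factor $0.618$ exactly as above. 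Once that is settled, the derandomisation is a routine enumerate-and-select over the $2$-universal family.
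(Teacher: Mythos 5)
Your proposal is correct and follows essentially the same route as the paper's proof: flip every variable whose negation occurs as a unit clause, drop the conflicting unit-clause pairs $(l, \neg l)$, apply Proposition~\ref{prop:gen_gold} to the resulting $2$-satisfiable positive-unit formula, derandomise by enumerating the $2$-universal family of Proposition~\ref{prop:uhash}, and un-flip the selected assignment, with all intermediate objects recomputed on the fly in $\Oh{\log{n}}$ space. If anything, your write-up is slightly more careful than the paper's in two spots: your flip rule excludes variables occurring in conflicting pairs (the paper flips their non-unit occurrences but never un-flips them), and you handle the small-$n$ regime, where the hash-family construction's requirement $n \geq b$ fails, by brute force.
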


\begin{proof}
    Let $F$ be an $r$-CNF formula with variables $x_1, \dotsc, x_n$. In what follows, we describe an algorithm which proves the claim.

    \textbf{Computing an equivalent $2$-satisfiable formula $F'$.} For each clause $C$ in $F$ with at least two literals, check if any variables $x$ appearing in $C$ also appear as a negated clauses $\neg x$ in $F$. If they do, flip the $x$-literals (replace $x$ with $\neg x$ or $\neg x$ with $x$) in $C$ and output the resulting clause. Otherwise, output $C$. The clauses not output yet are unit clauses, i.e.\ they have exactly $1$ literal. For each variable $x_i$, check if $x_i$ appears as a unit clause in $F$. If it does, output $x_i$. Then output the special flag \texttt{\#NEG}, to indicate that clauses to follow appear negated in $F$. For each variable $x_i$, check if it appears as a unit clause $\neg x_i$ in $F$. If it does, check if the unit clause $x_i$ also appears in $F$. If both $\neg x_i$ and $x_i$ are clauses in $F$, output nothing. Otherwise, output $x_i$. Observe that the only clauses of $F$ not output are unit clauses that appear in pairs $(l , \neg l)$.
    
    Let $F'$ be the conjunction of the clauses output and $S_{F'}$ be the stream output. With random access to $F$, $S_F$ is produced in time $n^{\Oh{1}}$ using $\Oh{\log{n}}$ bits of space. Clearly, $F'$ is $2$-satisfiable. Let $\phi$ be an assignment for $F'$. Define $\phi'(x_i) = 1 - \phi(x_i)$ for every $x_i$ appearing after the $\texttt{\#NEG}$ flag in $S_{F'}$ and define $\phi'(x_i) = \phi(x_i)$ otherwise. It is easy to see that $\phi$ satisfies the same number of clauses in $F'$ as $\phi'$ does in $F$, and that given access to $S_{F'}$ and $\phi$, the overhead for computing $\phi'$ is $n^{\Oh{1}}$ time and $\Oh{\log{n}}$ space. We use this transformation later on to compute an assignment for $F$ from an assignment for $F'$.

    \textbf{Computing an assignment for $F'$.} Using the procedure of Proposition~\ref{prop:uhash}, compute a $2$-universal hash family $\mathcal{H} = \oper{Univ}(n, 2, 618, 1000)$ and denote the stream of functions by $S_H$. Note that with $X_i \sim \mathcal{H}_i$ for $i \in [n]$, the random variables $X_1, \dotsc, X_n$ form a pairwise-independent random assignment. Thus, one of the assignments in $S_H$ achieves (for the $2$-satisfiable formula $F'$) the expectation value in Proposition~\ref{prop:gen_gold}.
    
    Let $m'$ be the number of clauses in $F'$. For each assignment $\phi$ in $S_H$, scan $S_F'$ to determine the number $c$ of clauses $\phi$ satisfies. If $c > 0.618 m'$, output $\phi$ and skip to the next step. By Proposition~\ref{prop:uhash}, $\oper{Hash}(n, 2, 618, 1000)$ is computed in time $n^{\Oh{1}}$ and $\Oh{\log{n}}$ bits of space, since $k = 2$ is constant. The overhead of this step is therefore $n^{\Oh{1}}$ time and $\Oh{\log{n}}$ bits of space. Denote the output stream of this step by $S_{\phi}$.

    \textbf{Computing an assignment for $F$.} Now convert the assignment $\phi$ from the previous step to an assignment $\phi'$ (according to the transformation described earlier) as follows. For each $x_i$, scan $S_{\phi}$ to determine the value $v = \phi(x_i)$, and scan $S_{F'}$ to determine if $x_i$ appears after the $\texttt{\#NEG}$ (it was flipped). If it does, output the assignment $\phi'(x_i) = 1 - v$. Otherwise, output the assignment $\phi'(x_i) = v$.  Since $\phi$ satisfies $c \geq 0.618 m'$ clauses in $F'$, $\phi'$ satisfies the same number of clauses in $F$. In particular, it satisfies at least a $0.618$-fraction of the non-unit clauses, and unit clauses that do not appear in $(l, \neg l)$ pairs.
    
    Of the pairs $(l , \neg l)$ of unit clauses appearing in $F$, exactly half are satisfied by any assignment for the variables appearing in them. Now for each $x_i$, scan $S_{\phi}$, to determine if $\phi$ assigns it a value. If it does not, output the assignment $\phi'(x_i) = 1$. Clearly, $\phi'$ now also satisfies exactly half of the unit clauses in $F$ appearing in pairs $(l, \neg l)$, i.e.\ it is an optimal assignment for those clauses. Thus, $\phi'$ is $0.618$-optimal assignment for all of $F$. The overhead of this conversion step is also $n^{\Oh{1}}$ time and $\Oh{\log{n}}$ bits of space. 
    
    Since the overheads for all steps are $n^{\Oh{1}}$ time and $\Oh{\log{n}}$ space, the overall running time is ${(n^{\Oh{1}})}^3 = n^{\Oh{1}}$ and the space used is $3 \cdot \Oh{\log{n}} = \Oh{\log{n}}$.
    \qed
\end{proof}

\subsection{Factor-$(\sqrt{2} / 2)$ Approximation Algorithm}
In the following, we adapt arguments in~\cite{CGV2020FOCS} to devise a $(\sqrt{2} / 2)$-approximation algorithm which runs in time $n^{\Oh{r}}$ and uses $\Oh{r \log{n}}$ bits of space. Consider the following definitions.

\begin{definition}[Bias]\label{defn:bias}
    Let $F$ be an $r$-CNF formula with variables $x_1, \dotsc, x_n$. For $i \in [n]$, the bias of $x_i$ is $\oper{bias}(x_i) =  \sum_{j \in [r]} (\#(j\text{-clauses containing}\ x_i) - \#(j\text{-clauses containing}\ \neg x_i)) / 2^j$.

    The bias of the entire formula is $\oper{bias}(F) = \sum_{i \in [n]} \abs{\oper{bias}(x_i)}$ and the formula $F$ is called positively biased if $\oper{bias}(x_i) \geq 0$ for each $i \in [n]$.
\end{definition}

The next proposition shows that depending on whether the bias of a formula is smaller than a certain value, one can satisfy a good proportion (in expectation) of the clauses in it by setting each variable to $1$ with fixed (bias-dependent) probability.

\begin{proposition}[Chou et al.~\cite{CGV2020FOCS}]\label{prop:gen_sqrt2}
    Let $F$ be a positively-biased $r$-CNF formula with $m$ clauses. For $i \in [r]$, let $m_i$ be the number of $i$-clauses in $F$. The following statements are true.
    \begin{itemize}
        \item The all-$1$'s assignment satisfies at least $\frac{\oper{bias}(F)}{2} + \sum_{i \in [r]} \frac{i m_i}{2^i}$ clauses in $F$.
        \item When $\oper{bias}(F) \leq b^* = 4 \sum_{i \in [r]} \brn{1 - \frac{i + 1}{2^i}} m_i$, an $r$-wise independent random assignment where variables are set to $1$ with probability $\frac{m - \oper{bias}(F)}{2m - 4 \oper{bias}(F)} \leq 1$ satisfies, in expectation, at least $\sum_{i \in [r]} \brn{1 - \frac{1}{2^i}} m_i + \frac{\oper{bias}(F)^2}{4 b^*}$ clauses in $F$.
        \item The best of the two assignments above satisfies at least a $(\sqrt{2} / 2)$-fraction of the maximum number of simultaneously-satisfiable clauses in $F$.
    \end{itemize}
\end{proposition}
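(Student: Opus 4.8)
The plan is to treat the three bullets in sequence, resting them all on one computation of the expected number of satisfied clauses under a product distribution. For a clause $C$ write $p_C$ and $n_C$ for its numbers of positive and negative literals, so $p_C + n_C = \abs{C}$. First I would record the clause-level form of the bias: since $F$ is positively biased, the absolute values in $\oper{bias}(F)$ may be dropped, and summing $\oper{bias}(x_i)$ over $i$ and regrouping by clause gives $\oper{bias}(F) = \sum_C (p_C - n_C)/2^{\abs{C}}$. Next, for the assignment setting each variable to $1$ independently with probability $p$, a clause $C$ is unsatisfied exactly when all its positive literals are false and all its negative literals are true, with probability $(1-p)^{p_C} p^{n_C}$; because every clause has at most $r$ variables, an $r$-wise independent assignment reproduces these per-clause probabilities, so the expected number of satisfied clauses is $E(p) = m - \sum_C (1-p)^{p_C} p^{n_C}$. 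A direct evaluation gives $E(1/2) = \sum_{i \in [r]} (1 - 2^{-i}) m_i$ and $E'(1/2) = 2\,\oper{bias}(F)$.

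For the first bullet I would take the $p \to 1$ limit: the all-$1$'s assignment leaves unsatisfied precisely the clauses all of whose literals are negative, so it satisfies $m - N$ clauses, where $N$ is the number of all-negative clauses. The target inequality is then equivalent to $N \le \sum_C \big(1 - (3 p_C + n_C)/2^{\abs{C}+1}\big)$. For clauses with at least one positive literal the summand is nonnegative (using $3\abs{C} \le 2^{\abs{C}+1}$), while each all-negative clause contributes a deficit of $\abs{C}/2^{\abs{C}+1}$; here is where positivity enters, since every variable of an all-negative clause must, by $\oper{bias}(x_i) \ge 0$, occur positively with enough weight in other clauses, and a charging argument distributing each deficit onto those positive occurrences closes the gap. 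I expect this charging to be routine once the bookkeeping is set up.

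For the second bullet the point is that $E(p)$ is improved over its value at $1/2$ by a small positive shift. Expanding $\sum_C (1-p)^{p_C} p^{n_C}$ and bounding it above by a convex quadratic in $p$ agreeing with $E$ to first order at $p = 1/2$, I would obtain a concave quadratic lower bound on $E(p)$; optimizing it over $p$ produces the stated probability $(m - \oper{bias}(F))/(2m - 4\,\oper{bias}(F))$ and an improvement of $\oper{bias}(F)^2/(4 b^*)$ over the uniform value. The hypothesis $\oper{bias}(F) \le b^*$ is what guarantees the displayed probability is at most $1$, as recorded in the statement, so that the bound stays valid.

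For the third bullet I would compare the better of the two guarantees to the optimum. Writing $L_1 = \oper{bias}(F)/2 + \sum_i i m_i / 2^i$ and $L_2 = \sum_i (1 - 2^{-i}) m_i + \oper{bias}(F)^2/(4 b^*)$, a short calculation gives $L_2 - L_1 = (\oper{bias}(F) - b^*)^2/(4 b^*) \ge 0$, so the biased assignment dominates whenever it is available and all-$1$'s takes over once $p$ would exceed $1$. The remaining task is to pin down a matching upper bound on $\mathrm{OPT}$: the trivial $\mathrm{OPT} \le m$ is too weak on balanced instances (for unit clauses $\mathrm{OPT}$ is only $m/2 + \oper{bias}(F)$), so I would establish a bias- and clause-size-sensitive upper bound that reduces to $m/2 + \oper{bias}(F)$ on unit clauses while tending to $m$ on wide, positively-satisfiable clauses. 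With such a bound, the ratio $\max(L_1, L_2)/\mathrm{OPT}$ becomes a function of the normalized parameters, and the claim reduces to a one-dimensional extremal computation whose worst case evaluates to $\sqrt{2}/2$. I expect constructing the right upper bound on $\mathrm{OPT}$ and verifying that this factor-revealing optimization bottoms out at exactly $\sqrt{2}/2$ to be the main obstacle; the two expectation computations above are comparatively mechanical.
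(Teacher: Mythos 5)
First, a framing point: the paper never proves this proposition at all --- it is imported from Chou et al.~\cite{CGV2020FOCS} and used as a black box in Theorem~\ref{thrm:gen_sqrt2} --- so your reconstruction can only be judged on its own merits. On those merits, it establishes the first bullet but not the other two. The first bullet's charging step that you defer is indeed routine, and in fact needs no per-variable bookkeeping: summing $\oper{bias}(x_i) \ge 0$ over all variables gives $\sum_C n_C 2^{-\abs{C}} \le \sum_C p_C 2^{-\abs{C}}$ (your notation), so the total deficit $\sum_{C : p_C = 0} n_C 2^{-\abs{C}-1}$ is at most $\sum_{C : p_C \ge 1} p_C 2^{-\abs{C}-1}$, and then the per-clause inequality $4p_C + n_C \le 4\abs{C} \le 2^{\abs{C}+1}$ for clauses with a positive literal closes the gap.

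The second bullet is where the plan genuinely fails. In a concave quadratic $E(1/2) + 2B\delta - A\delta^2$ (with $B = \oper{bias}(F)$, $\delta = p - 1/2$), the maximizer $\delta^* = B/A$ and the gain $B^2/A$ are tied to the same $A$: the claimed gain $B^2/(4b^*)$ forces $A = 4b^*$ and hence $p = 1/2 + B/(4b^*)$, while the stated probability $(m-B)/(2m-4B) = 1/2 + B/(2m-4B)$ forces $A = 2m - 4B$; these coincide only when $2m - 4B = 4b^*$, so no single optimization can output both, contrary to what you assert. Worse, the bullet as transcribed here is simply false, so no argument can rescue it: take $F$ with clauses $x_1$, $\neg x_1$, $x_2$, $\neg x_2$, $(x_3 \vee x_4)$. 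Then $m = 5$, $m_1 = 4$, $m_2 = 1$, $B = 1/2 \le b^* = 1$, the stated probability is $4.5/8 = 0.5625$, and the expected number of satisfied clauses is exactly $2 + 1 - (1 - 0.5625)^2 \approx 2.809$, strictly below the claimed $2.75 + 1/16 = 2.8125$. (The stated $p$ is dragged toward $1/2$ by the bias-neutral unit pairs through $m$, while the claimed gain, which depends only on $b^*$, is not; the choice $p = 1/2 + B/(2b^*) = 3/4$ that your quadratic with $A = 4b^*$ actually produces yields $2.9375$ and is fine.) A careful execution of your own plan would therefore end in correcting the statement, not proving it. Finally, for the third bullet your identity $L_2 - L_1 = (b^* - B)^2/(4b^*)$ is correct and worth keeping, but the approximation ratio itself is left entirely unproved: you defer both the bias-sensitive upper bound on the optimum and the factor-revealing optimization, calling them ``the main obstacle.'' That is precisely where the content of Chou et al.'s theorem lies --- two lower bounds yield no ratio without a matching upper bound on the optimum --- so, as written, the proposal proves one of the three claims.
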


We now show how the above proposition can be used to compute good approximations in sublinear space. For any $r$-CNF formula $F$, we first compute an equivalent positively-biased formula $F'$ and then using Proposition~\ref{prop:uhash}, compute an assignment for $F'$ which is a $(\sqrt{2} / 2)$-approximation. We then convert this to an assignment for $F$ satisfying the same number of clauses.

\begin{theorem}\label{thrm:gen_sqrt2}
	For any instance of $\pMaxrSAT{}$ with $n$ variables, one can compute a $(\sqrt{2} / 2)$-approximate optimal assignment in time $n^{\Oh{r}}$ using $\Oh{r \log{n}}$ bits of space.
\end{theorem}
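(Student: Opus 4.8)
The plan is to follow the template of Theorem~\ref{thrm:gen_gold}. Given an $r$-CNF formula $F$ on variables $x_1, \dotsc, x_n$, I would (i) reduce $F$ to an equivalent \emph{positively-biased} formula $F'$ satisfiable to the same extent, (ii) invoke Proposition~\ref{prop:gen_sqrt2} to pull out of $F'$ an assignment meeting a $(\sqrt{2}/2)$-fraction of its optimum, and (iii) map that assignment back to $F$. The one genuinely new feature relative to the $0.618$ algorithm is that the sampling probability in the second bullet of Proposition~\ref{prop:gen_sqrt2} depends on $F'$, so the parameters handed to Proposition~\ref{prop:uhash} must be computed on the fly rather than hard-wired as the constants $618, 1000$.

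First I would build $F'$ by a streaming transformation. For each variable $x_i$ I scan $F$ and accumulate $\oper{bias}(x_i) = \sum_{j \in [r]} (\#(j\text{-clauses containing } x_i) - \#(j\text{-clauses containing } \neg x_i))/2^j$, keeping its numerator over the common denominator $2^r$ in a single $\Oh{r + \log n}$-bit register. If $\oper{bias}(x_i) < 0$ I flip every occurrence of $x_i$ as the clauses are emitted, recording the flip by listing $x_i$ after a special flag analogous to \texttt{\#NEG} in Theorem~\ref{thrm:gen_gold}. The emitted stream $S_{F'}$ then describes a formula with $\oper{bias}(x_i) \geq 0$ for all $i$, hence a positively-biased one, and the recorded flips furnish a clause-count-preserving bijection between assignments of $F'$ and of $F$. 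This pass and its inverse each cost $n^{\Oh{1}}$ time and $\Oh{r \log n}$ space.

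Next I would realise the two assignments of Proposition~\ref{prop:gen_sqrt2} and keep the better. One scan of $S_{F'}$ counts the clauses satisfied by the all-$1$'s assignment; a further scan yields $m$, the counts $m_i$, the value $\oper{bias}(F') = \sum_{i \in [n]} \abs{\oper{bias}(x_i)}$ and the threshold $b^* = 4\sum_{i \in [r]} \brn{1 - \frac{i+1}{2^i}} m_i$. When $\oper{bias}(F') \leq b^*$ I form the target probability $p = \frac{m - \oper{bias}(F')}{2m - 4\,\oper{bias}(F')}$, write it as an exact fraction $a/b$, and enumerate $\oper{Univ}(n, r, a, b)$, whose uniformly random member is an $r$-wise-independent $\brc{0,1}$-assignment with marginal $p$; scanning $S_{F'}$ against each member, I stop at the first one achieving the expected count of the second bullet, which must exist since that expectation is the family average. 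Outputting the better of the (at most two) assignments so produced gives, by the third bullet of Proposition~\ref{prop:gen_sqrt2}, a $(\sqrt{2}/2)$-approximate assignment for $F'$, which I convert back to $F$ through the recorded flips.

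I expect the main obstacle to be keeping the hash-family parameters inside the $n^{\Oh{r}}$-time, $\Oh{r \log n}$-space budget, since $a$ and $b$ now grow with $F'$. As each $\oper{bias}(x_i)$ has denominator dividing $2^r$, clearing denominators gives integers $a = 2^r m - \beta$ and $b = 2(2^r m - 2\beta)$ with $\beta = 2^r\,\oper{bias}(F')$, so $b = \Oh{2^r m}$. The crux is then that in the intended regime $r = \Oh{\log n}$ and $m = n^{\Oh{1}}$ one has $b = n^{\Oh{1}}$ and $\log b = \Oh{\log n}$, whence Proposition~\ref{prop:uhash}---applied over the range $\max\brc{n, b}$ when $b > n$ and restricted to the first $n$ coordinates---enumerates the family in $n^{\Oh{r}}$ time and $\Oh{r \log n}$ space. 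The apparent division-by-zero $2m - 4\,\oper{bias}(F') = 0$ never occurs here: in the regime $\oper{bias}(F') \leq b^*$ Proposition~\ref{prop:gen_sqrt2} guarantees $p \leq 1$, forcing a positive denominator, and outside that regime $p$ is never formed. Since all stages run in $n^{\Oh{r}}$ time and $\Oh{r \log n}$ space, composing them as in Section~\ref{ssct:ts_overhead} yields the claimed bounds.
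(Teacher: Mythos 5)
Your overall route is the same as the paper's: compute per-variable biases, flip the negatively-biased variables to obtain an equivalent positively-biased $F'$, either take the all-$1$'s assignment or enumerate an $r$-universal family whose marginal matches the probability in Proposition~\ref{prop:gen_sqrt2}, stop at the first member meeting the expectation (which exists by averaging), and undo the flips. In two respects you are actually more faithful to Proposition~\ref{prop:gen_sqrt2} than the paper's write-up: you keep the better of the two candidate assignments rather than branching once on $\oper{bias}(F') > b^*$, and your acceptance threshold is the full expectation $\sum_i \brn{1 - 1/2^i} m_i + \oper{bias}(F')^2 / (4 b^*)$, whereas the paper's displayed threshold drops the first term.

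The genuine gap is in your handling of the hash-family parameters. Clearing denominators by $2^r$ gives $b = \Oh{2^r m}$, and your complexity analysis then needs $b = n^{\Oh{1}}$, which you obtain only by assuming $r = \Oh{\log n}$ and $m = n^{\Oh{1}}$. The theorem grants neither hypothesis: an $r$-CNF instance on $n$ variables can have $n^{\Theta(r)}$ distinct clauses, and in that case your own accounting---enumerating a family over range $\max\brc{n, b} = n^{\Theta(r)}$ with independence $r$---yields $n^{\Oh{r^2}}$ time and $\Oh{r^2 \log n}$ space, not the claimed $n^{\Oh{r}}$ and $\Oh{r \log n}$. So as written you prove a restriction of the theorem, not the theorem. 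The paper sidesteps the $2^r$ blow-up differently: it rounds the probability, calling $\oper{Univ}\brn{n, r, \ceil{m - \oper{bias}(F)}, \ceil{2m - 4 \oper{bias}(F)}}$, so both parameters stay $\Oh{m}$ and no extra factor of $2^r$ enters (the precondition $n \geq b$ of Proposition~\ref{prop:uhash} is glossed over there as well, but without inflating the parameters and without restricting the instance class). To repair your version you would either have to adopt the paper's rounding---and then address the same question it leaves implicit, namely that replacing $p$ by the rounded fraction does not destroy the expectation bound---or show that the family over the enlarged range can still be enumerated in $n^{\Oh{r}}$ time; the regime assumption does neither.
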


\begin{proof}
    Let $F$ be an $r$-CNF formula with variables $x_1, \dotsc, x_n$ and for $i \in [n]$, let $m_i$ be the number of $i$-clauses in $F$. In what follows, we describe an algorithm which proves the claim.

    \textbf{Computing $\oper{bias}(F)$ and $b^*$.} Set $b_F, b^* \gets 0$. For each $i \in [n]$, compute $b_i = \oper{bias}(x_i)$ and $m_i$. It is easy to see that with random access to $F$, this can be done in logarithmic space. Set $b_F \gets b_F + \abs{b_i},\ b^* \gets b^* + (1 - (i + 1) / 2^i) m_i$, and if $b_i < 0$, output $x_i$ to indicate that $x_i$ has negative bias in $F$. Then discard $(b_i, m_i)$ and move to the next iteration. Finally, store $b_F$ and $b^* \gets 4 b^*$ for later steps using $\Oh{\log{n}}$ bits of space. The entire loop takes time $n^{\Oh{1}}$ and uses $\Oh{\log{n}}$ bits of space. Let $S_B$ be the stream output. 

    \textbf{Computing an equivalent positively-biased formula $F'$.} For each clause $C$ in $F$, check if any variables $x$ appearing in $C$ also appear in the stream $S_B$. If they do, flip the $x$-literals (replace $x$ with $\neg x$ or $\neg x$ with $x$) in $C$ and output the resulting clause. Otherwise, output $C$. Observe that the variables $x$ flipped are precisely those for which $\oper{bias}(x) < 0$ in the previous step. Thus, the clauses output form a positively-biased formula. Denote the output stream by $S_{F'}$. The overhead of this step is $n^{\Oh{1}}$ time and $\Oh{\log{n}}$ space.

    \textbf{Computing an assignment for $F'$.} If $b_F > b^*$, then output the all-$1$'s assignment and skip to the next step. Otherwise, using the procedure of Proposition~\ref{prop:uhash}, compute an $r$-universal hash family $\mathcal{H} = \oper{Univ}(n, r, \ceil{m - b_F}, \ceil{2m - 4 b_F})$ and denote the stream of functions by $S_H$. Similarly as in the proof of Theorem~\ref{thrm:gen_gold}, one of the assignments in $S_H$ achieves the expectation value in Proposition~\ref{prop:gen_sqrt2}.
    
    For each assignment $\phi$ in $S_H$, scan $S_F'$ to determine the number $c$ of clauses $\phi$ satisfies. If $c \geq {b_F}^2 / (16 \sum_{i = 2}^k (1 - (i + 1) / 2^i) m_i)$, output $\phi$ and skip to the next step. The family of assignments is computed in time $n^{\Oh{r}}$ and $\Oh{r \log{n}}$ bits of space, so the overhead of this step is $n^{\Oh{r}}$ time and $\Oh{r \log{n}}$ bits of space. Denote the output stream of this step by $S_{\phi}$.

    \textbf{Computing an assignment for $F$.} Convert the assignment $\phi$ from the previous step to an assignment $\phi'$ for $F$ as follows. For each $x_i$, scan $S_{\phi}$ to determine the value $v = \phi(x_i)$, and scan $S_{B}$ to check if $x_i$ appears in it (it was flipped). If it does, output the assignment $\phi'(x_i) = 1 - v$. Otherwise, output the assignment $\phi'(x_i) = v$. Clearly, $\phi$ satisfies the same number of clauses in $F'$ as $\phi'$ does in $F$. By Proposition~\ref{prop:gen_sqrt2}, this number is at least a $(\sqrt{2} / 2)$-fraction of the maximum number of simultaneously-satisfiable clauses in $F$. With access to $S_{\phi}$ and $S_B$, the overhead of this step is $n^{\Oh{1}}$ time and $\Oh{\log{n}}$ space.  
    
    Thus, the algorithm outputs a $(\sqrt{2} / 2)$-approximate optimal assignment as required. Observe that the maximum overhead of any of the steps is $n^{\Oh{r}}$ time and $\Oh{r \log{n}}$ space. Combining the (constantly many) overheads, the overall running time is $n^{\Oh{r} \cdot \Oh{1}} = n^{\Oh{r}}$ and the space used is $\Oh{r \log{n}} \cdot \Oh{1} = \Oh{r \log{n}}$.
	\qed
\end{proof}

\section{$\pPlMaxrSAT{}$}\label{sect:planar}
In this section, we devise a sublinear-space PTAS for $\pPlMaxrSAT{}$ along the lines of~\cite{KM1996STOC} using the partitioning approach in~\cite{Bak1994JACM} for planar graph problems. We use the following result to perform a BFS traversal of (the incidence graphs of) the input instances in sublinear space.

\begin{proposition}[Chakraborty and Tewari~\cite{CT2015report}, Theorem 1]\label{prop:plan_BFS}
    There is an algorithm which takes as input a planar graph on $n$ vertices and computes a BFS sequence for $G$ in time $n^{\Oh{1}}$ using $\Oh{\sqrt{n} \log{n}}$ bits of space.
\end{proposition}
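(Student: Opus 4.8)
The plan is to reduce the task of emitting a BFS sequence to repeatedly answering distance queries, and to answer those queries space-efficiently by recursive planar separation. A BFS sequence is just a listing of the vertices in nondecreasing order of distance $d(s, v)$ from the source $s$ (ties broken, say, by vertex index), so it suffices to implement a subroutine $\algo{dist}(s, v)$ returning $d(s, v)$. Given such a subroutine, the sequence is produced by an outer loop: for each level $\ell = 0, 1, \dotsc, n - 1$, scan all vertices $v$ and output those with $\algo{dist}(s, v) = \ell$. This loop keeps only two $\Oh{\log n}$-bit counters (the current level and the current vertex) and recomputes distances on demand, exactly as in the recomputation model of Section~\ref{ssct:ts_overhead}. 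Hence the entire space budget must go into implementing $\algo{dist}$ in $\Oh{\sqrt n \log n}$ bits; the outer loop contributes only $\Oh{\log n}$ space and a polynomial time factor.

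The heart of the argument is to implement $\algo{dist}$ by divide-and-conquer on a planar separator. By the planar separator theorem there is a set $\Sigma$ of $\Oh{\sqrt n}$ vertices whose deletion splits $G$ into pieces of size at most $2n/3$; the enabling fact (the recent separator machinery underlying Proposition~\ref{prop:plan_BFS}) is that such a $\Sigma$ is computable in $\Oh{\sqrt n \log n}$ space. I would store $\Sigma$ together with the labels $d(s, u)$ for every $u \in \Sigma$. Any shortest $s$-$v$ path either stays inside the piece $P$ containing $v$ or crosses $\Sigma$; accounting for both cases, $d(s, v)$ equals the shortest-path distance to $v$ in the subgraph induced by $P \cup \Sigma$, where each separator vertex $u$ is treated as a source pre-loaded with initial distance $d(s, u)$. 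This is a boundary-labeled shortest-path problem on a planar graph of size at most $2n/3$, solved by the same routine recursively: find a separator of $P$, recursively compute the distance labels of its separator vertices (now measured from the boundary $\Sigma$), and finally answer the query for $v$.

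For the resource bounds, the recursion depth is $\Oh{\log n}$ since piece sizes shrink by a constant factor, and along any single root-to-leaf path the separator sizes decay geometrically, so the separators held simultaneously total $\Oh{\sqrt n} \cdot \sum_{k \ge 0} (2/3)^{k/2} = \Oh{\sqrt n}$ vertices, each carrying an $\Oh{\log n}$-bit label, for $\Oh{\sqrt n \log n}$ bits overall. Because each recursive piece is recomputed on demand rather than stored, the running time gains a polynomial factor per level and remains $n^{\Oh{1}}$.

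The hard part, I expect, is twofold. First, the recursive pieces must be represented implicitly: since a subgraph cannot be stored, a piece is described by the separators that carved it out together with an $\Oh{\sqrt n \log n}$-space membership test, and the separator subroutine must run correctly on such an implicitly presented planar subgraph. Second is the space accounting across the recursion---one must reuse space between sibling subproblems and retain only the separators on the active recursion path, so that the bound telescopes to $\Oh{\sqrt n \log n}$ rather than accumulating over all $n$ vertices. Getting the boundary-labeled distance recurrence exactly right, so that paths which leave and re-enter a piece through $\Sigma$ are correctly costed, is where the correctness argument will need the most care.
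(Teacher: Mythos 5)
First, a framing point: the paper does not prove this proposition---it is imported verbatim as Theorem~1 of Chakraborty and Tewari~\cite{CT2015report}---so there is no internal proof to compare yours against; what follows measures your sketch against the separator-based arguments that the cited line of work (\cite{AKNW2014MFCS} and its successors) actually uses. Your blueprint is the right family of ideas: reduce the output to distance queries, and answer distance queries by recursive planar separation with boundary-labeled subproblems. One caveat on the reduction itself: listing vertices in nondecreasing order of $d(s,v)$ with arbitrary tie-breaking yields the BFS \emph{levels} (which is all that Lemma~\ref{lemm:plan_partition} needs), but it is not in general a valid BFS discovery order---ties within a level must be broken consistently with the order of parents in the previous level.

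The genuine gaps are two. First, there is a chicken-and-egg problem at the heart of your recursion: your recurrence computes $d(s,v)$ for $v$ inside a piece $P$ \emph{given} correct labels $d(s,u)$ on the separator $\Sigma$, but you never say how those labels are obtained, and the same recurrence cannot produce them, because a shortest path from $s$ to a vertex of $\Sigma$ may weave in and out of many different pieces, so no single boundary-labeled piece subproblem captures it. The standard repair is precisely the step you defer as ``the hard part'': build an auxiliary graph on $\Sigma \cup \brc{s}$ whose edge $(u,u')$ is weighted by the shortest $u$--$u'$ path confined to a single piece (these weights are what the recursion computes), then run a Bellman--Ford/Dijkstra over this $\Oh{\sqrt{n}}$-vertex graph, recomputing edge weights on demand; correctness comes from decomposing a weaving path at its successive visits to $\Sigma$. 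This is the crux of the cited proof, not a detail that can be flagged and skipped. Second, your time analysis does not hold as stated: a ``polynomial factor per level'' across a recursion of depth $\Theta(\log n)$ gives $n^{\Theta(\log n)}$, i.e.\ quasi-polynomial time, not $n^{\Oh{1}}$. Indeed, with the auxiliary-graph fix above, the naive count is $T(n) = n^{\Oh{1}} \cdot T(2n/3)$, which is quasi-polynomial; obtaining polynomial time simultaneously with $\Oh{\sqrt{n}\log{n}}$ bits of space requires a more careful global accounting of recursive invocations (or a differently structured recursion) and is exactly what makes the cited theorem nontrivial.
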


The next result shows how to use the BFS traversal procedure to partition---in sublinear space---the input formulas into subformulas of bounded diameter.
\begin{lemma}\label{lemm:plan_partition}
    Let $F$ be an $r$-CNF formula with $n$ variables and $m$ clauses that has a planar incidence graph and let $k \in \N$. One can compute a sequence $F_1, \dotsc, F_l$ of subformulas of $F$ such that
    \begin{enumerate}
        \item the diameter of the incidence graph of each $F_i$ ($i \in [l]$) is at most $k$,
		\item $F_i$ and $F_j$ have no variables in common for all $i, j \in [l]$ with $i \neq j$, and
        \item $F_1, \dotsc, F_l$ together contain at least $(1 - 1 / k) m$ clauses of $F$.
    \end{enumerate}
    The procedure runs in time $n^{\Oh{1}}$ and uses $\Oh{\sqrt{n} \log{n}}$ bits of space.
\end{lemma}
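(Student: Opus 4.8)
The plan is to adapt Baker's layering technique to the incidence graph $G$ of $F$ and to carry it out in a streaming, recompute-on-demand fashion so that nothing larger than a BFS data structure is ever stored. First I would note that $G$ is planar and bipartite, with its two sides being the clauses and the variables, and I may assume $G$ is connected (otherwise I process each component in turn, reusing the same space). Using Proposition~\ref{prop:plan_BFS} I would fix a root and regard the BFS layer (distance from the root) of a vertex as a function that can be \emph{recomputed} on demand in time $n^{\Oh{1}}$ and space $\Oh{\sqrt{n}\log{n}}$, rather than stored. Because $G$ is bipartite, the clauses occupy the layers of one parity; indexing these clause-layers $Q_1, Q_2, \dotsc$ by depth, I record the key locality property: every variable is adjacent only to clauses of two consecutive clause-layers $Q_i, Q_{i+1}$ (its neighbours lie one BFS step above and below it), and every clause of $Q_i$ has all of its variables in the two variable-layers flanking $Q_i$.

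Next I would use the clause-layers themselves as the seams of the partition. For an offset $j \in \brc{0, \dotsc, k-1}$, I declare the clauses of every $Q_i$ with $i \equiv j \pmod{k}$ to be discarded; the surviving clause-layers then fall into maximal runs of $k-1$ consecutive layers, and each run, together with the variables of its clauses, forms one subformula $F_i$. Discarding a whole clause-layer (rather than a variable-layer) is what makes both the counting and the disjointness clean. Each clause lies in exactly one clause-layer, so it is discarded for exactly one of the $k$ offsets; hence some offset discards at most $m/k$ clauses and the retained clauses number at least $(1 - 1/k)m$, giving property~3. For property~2, the locality property shows that any variable touches clauses of only two \emph{consecutive} clause-layers, which cannot fall into two different runs (a run boundary is a discarded seam), so each variable belongs to at most one $F_i$ and the variable sets are pairwise disjoint. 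For property~1, each $F_i$ is confined to at most $k-1$ consecutive clause-layers, hence to $\Oh{k}$ consecutive BFS layers; rescaling the band width so that each run spans about $k/2$ clause-layers confines every $F_i$ to at most $k$ consecutive BFS layers, the bounded-width ``slab'' that the PTAS of Section~\ref{sect:planar} will exploit.

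Finally I would assemble everything within the space budget. To select the offset I would not maintain $k$ counters at once (this could exceed the budget when $k$ is large); instead I make one streaming pass over the clauses for each offset $j$, recomputing each clause's BFS depth and tallying discards in a single $\Oh{\log n}$-bit counter, then keep the best $j$ in total time $k \cdot n^{\Oh{1}} = n^{\Oh{1}}$. With $j$ fixed, one further pass emits the subformulas: for each clause I recompute its depth, skip it if it lies in a seam, and otherwise write it (with its literals) to the output stream of the run containing its clause-layer, tagging runs by an index derived from the depth so that no run-to-clause table need be stored. Every step is a constant number of BFS-guided passes, so the running time is $n^{\Oh{1}}$ and the space is dominated by the BFS procedure, namely $\Oh{\sqrt{n}\log{n}}$ bits.

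The step I expect to be the main obstacle is property~1 in tandem with the space bound. One must argue that the genuine graph-theoretic \emph{diameter} of a slab is controlled by its layer-width alone; this is the delicate point, and it is where planarity is really used (the slab is $\Oh{k}$-outerplanar, and contracting the part of $G$ above it to a single vertex keeps the graph planar while placing every slab vertex within distance $<k$ of that vertex, bounding the diameter). Reconciling this certificate with the literal incidence graph of $F_i$, and doing so \emph{without} ever materialising the layering --- so that each vertex's depth and each run's index are reconstructed on the fly using only $\Oh{\log n}$ bits on top of the BFS structure --- is the crux; a secondary point to dispatch carefully is disconnected incidence graphs and the parity of the clause-layers, which I would handle componentwise.
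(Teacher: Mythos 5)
Your construction is genuinely different from the paper's, and in two respects cleaner. The paper makes the incidence graph connected by adding a dummy variable $x_{n+1}$ (plus the clause $\neg x_{n+1}$), takes as seams the three-layer groups $U_j = L_{2j} \cup L_{2j+1} \cup L_{2j+2}$ (clause--variable--clause), unions them modulo $k$, and gets variable-disjointness from the fact that the deleted seams contain whole variable layers; its inclusion--exclusion count charges every clause layer to two groups, so as written it only yields $(1 - 2/k)m$ retained clauses rather than the stated $(1 - 1/k)m$. Your seams are single clause-layers taken modulo $k$: since each clause lies in exactly one clause-layer, averaging over the $k$ offsets gives exactly the claimed $(1 - 1/k)m$, and disjointness follows from the bipartite locality fact (each variable's clause-neighbours lie in two consecutive clause-layers, which cannot straddle a discarded seam) instead of from deleting variable layers. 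Your componentwise treatment of disconnected inputs, and the per-offset streaming passes with BFS depths recomputed on demand, are legitimate and match the paper's overhead-composition accounting. (One small tension you noticed yourself: rescaling the modulus so that each slab spans only $k$ BFS layers degrades your count back to $(1 - 2/k)m$, the same slack the paper has.)

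The genuine gap is property 1, and you located it correctly. A subformula confined to $\Oh{k}$ consecutive BFS layers need \emph{not} have incidence graph of diameter $\Oh{k}$: two vertices in the same layer of a slab may only be joined, inside the slab, by a path that winds all the way around it (e.g.\ when the slab induces a long cycle), so ``few layers'' does not imply ``small diameter,'' for any seam spacing. Your proposed repair --- contract everything above the slab to a single vertex --- bounds the radius of the \emph{contracted} planar graph, and hence via Proposition~\ref{prop:local_tw} the treewidth of the slab; it does not bound the diameter of the incidence graph of $F_i$ itself, which is what property 1 literally asserts. So as a proof of the lemma verbatim, your argument is incomplete. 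You should know, however, that the paper's own proof commits exactly the same error: it passes from ``components contain at most $k-2$ layers of the BFS traversal'' directly to ``their diameters are at most $k-2$.'' What rescues the overall scheme is that Lemma~\ref{lemm:plan_bd_width} uses the diameter hypothesis only to derive a treewidth bound through Proposition~\ref{prop:local_tw}, and a treewidth bound of $\Oh{k}$ is precisely what your contraction argument delivers. With property 1 restated as ``the incidence graph of each $F_i$ has treewidth $\Oh{k}$,'' your construction gives a sound proof of everything Theorem~\ref{thrm:planar} actually needs; as written, it leaves open the same hole the paper does.
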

\begin{proof}
	Let ${x_1, \dotsc, x_n}$ be the set of variables in $F$, $\brc{C_1, \dotsc, C_m}$ be the set of clauses in $F$, $G_F$ be the incidence graph of $F$, and $V_F$ (resp.\ $C_F$) be the vertices of $G_F$ corresponding to the variables (resp.\ clauses) of $F$. In what follows, we describe a procedure which proves the claim.

	\textbf{Adding a dummy vertex.} This step ensures that $G_F$ is connected. Determine the connected components of $G_F$ using the connectivity algorithm of Asano et al.~\cite{AKNW2014MFCS}: for any two vertices, it runs in time $n^{\Oh{1}}$ and uses $\Oh{\sqrt{n} \log{n}}$ bits of space to check if the two vertices are connected. Then add a dummy variable vertex $x_{n + 1}$ which has an edge to an arbitrary clause vertex in each connected component, making $G_F$ connected. Additionally, add the clause $\neg x_{n + 1}$ (with an edge to $x_{n + 1}$) to ensure that assignments for the formula $F'$ determined by the resulting graph $G_{F'}$ are in $1$-$1$ correspondence with assignments for $F$. Now output $F'$ and $G_{F'}$, and denote this output stream by $S_{F'}$. With random access to $G_F$, it is not hard to see that this transformation runs in time $n^{\Oh{1}}$ and uses $\Oh{\sqrt{n} \log{n}}$ bits of space.

	\textbf{Determining the BFS levels of $G_{F'}$.} 
	Consider a BFS traversal of $G_{F'}$ starting at (the variable vertex corresponding to) $x_{n + 1}$. Suppose the depth of the traversal is $d_0$. Let $d = d_0$ if $d_0$ is even and $d = d_0 + 1$ otherwise. For $i \in [d]$, set $L_i = \setb{v \in V(G_{F'})}{\oper{dist}(u, v) = i - 1}$. Observe that $L_1, \dotsc, L_d$ are precisely the levels of the BFS tree, with $L_i \subseteq V_F$ for odd $i$ and $L_i \subseteq C_F$ for even $i$.

	\textbf{Splitting $G_F$.} Consider the following subsets of $\V{G_{F'}}$.
	\begin{itemize}
		\item For $i \in [d / 2 - 1]$, let $U_i = L_{2i} \cup L_{2i + 1} \cup L_{2i + 2}$. Observe that $U_i \cap U_j \neq \emptyset$ iff $\abs{i - j} \leq 1$ and for $i \in [d / 2 - 1]$, $U_i \cap U_{i + 1} = L_{2i + 2}$.
		
		\item For $i \in \brc{0, \dotsc, k - 1}$, let $W_i = \bigcup_{j \equiv i \pmod{k}} U_j$. Observe that $W_i \cap W_j \neq \emptyset$ iff $i - j \equiv \pm 1 \pmod{k}$ and for $i \in [d / 2 - 1]$, $W_i \cap W_{i + 1} = \bigcup_{j \equiv i \pmod{k}} L_{2j + 2}$.

		\item For any $A \subseteq V_F \cup C_F$, let $C(A)$ be the clause vertices that appear in $A$, i.e.\ $C(A) = A \cap C_F$.
	\end{itemize}

	Clearly, for $i \in [d / 2 - 1]$, $C(W_i) = \bigcup_{j \equiv i \pmod{k}} L_{2j} \cup L_{2j + 2}$ and $C_F = \bigcup_{i \in {0, \dotsc, k - 1}} C(W_i)$. By the inclusion-exclusion principle, we have
    \begin{align*}
        \abs{C(W_0)} + \dotsb &+ \abs{C(W_{k - 1})} = \abs{C_F} + \abs{C(W_0) \cap C(W_1)} + \dotsb + \abs{C(W_{k - 1}) \cap C(W_0)}\\
        &= \abs{C_F} + \sum_{i \in \brc{0, \dotsc k - 1}} \abs{L_{2j + 1}} \leq \abs{C_F} + \abs{C_F} = 2 \abs{C_F}.
	\end{align*}
	
	Thus, for some $i \in \brc{0, \dotsc, k - 1}$, we have $\abs{C(W_i)} \leq 2 \abs{C_F} / k$, i.e.\ $W_i$ contains at most a $(2 / k)$-fraction of the clauses in $F'$ (and $F$). Consider the graph $G_{F'} - W_i$. Observe that $W_i$ comprises groups of $3$ consecutive layers of the BFS traversal, and consecutive groups are $k - 2$ layers apart. Thus, removing $W_i$ from $G_{F'}$ disconnects $G_{F'}$ into connected components which contain at most $k - 2$ layers of the BFS traversal each, i.e.\ their diameters are at most $k - 2$. It follows that the formula $F^+$ corresponding to $G_{F'} - W_i$ satisfies the conditions of the claim. 

	To compute $F^+$, perform the following steps. Using the procedure of Proposition~\ref{prop:plan_BFS}, perform a BFS traversal of the $G_{F'}$ portion of $S_F'$, starting at $x_{n + 1}$. Let $S_B$ be the stream produced by this procedure. The overhead of the procedure is $n^{\Oh{1}}$ time and $\Oh{\sqrt{n} \log{n}}$ bits of space. For each $i \in [k]$, scan $S_B$ to determine the number $\abs{C(W_i)}$ of clauses in $W_i$. For $i$ achieving the smallest $\abs{C(W_i)}$ in the loop, scan $S_B$ and output only the levels (and edges between them) which do not appear in $W_i$. Let $S_{F^+}$ be this output stream. Now scan $S_{F+}$, and for each sequence of consecutive (connected) levels, output the subformula of $F$ induced by those levels. Observe that $S_{F^+}$ is produced by scanning $S_{F'}$ and the final output is produced by scanning $S_{F^+}$. Each scan only involves counting elements in the stream and truncating parts of the stream to produce the output stream. Thus, the overhead of this entire step is $n^{\Oh{1}}$ time and $\Oh{\sqrt{n} \log{n}}$ space.
	
	For the various steps, the maximum overhead is $n^{\Oh{1}}$ time and $\Oh{\sqrt{n} \log{n}}$ bits of space. Thus, combining the overheads for the various steps, the resource costs of the entire algorithm are $n^{\Oh{1}}$ time and $\sqrt{n} \log{n}$ bits of space.
	\qed
\end{proof}

The next two results allow use to compute tree decompositions for incidence graphs of bounded diameter in sublinear space.
\begin{proposition}[Robertson and Seymour~\cite{RS1984JCTB}, Theorem 2.7]\label{prop:local_tw}
	The treewidth of any planar graph with diameter $d$ is at most $3d + 1$.
\end{proposition}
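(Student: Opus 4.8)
The plan is to build a tree decomposition directly from a breadth-first layering of $G$ and to control its width through the theory of outerplanar graphs. Throughout I assume $G$ is connected, so that its diameter is finite; the disconnected case is handled componentwise.

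\textbf{Reducing the diameter to a bounded-height layering.} Since $G$ has diameter $d$, every vertex has eccentricity at most $d$. Fixing an arbitrary root $v_0$ and running BFS yields layers $L_0 = \brc{v_0}, L_1, \dotsc, L_h$, where $L_i = \setb{v \in \V{G}}{\oper{dist}(v_0, v) = i}$ and $h \leq d$. Every edge of $G$ joins two vertices in the same layer or in consecutive layers, so the layers act as a sequence of nested separators: for each $i$, the layer $L_i$ separates $\bigcup_{j < i} L_j$ from $\bigcup_{j > i} L_j$.

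\textbf{From layers to outerplanarity.} Fix a planar embedding of $G$ with $v_0$ on the outer face. I would argue that $G$ is $(h+1)$-outerplanar, i.e.\ that repeatedly deleting the vertices currently on the outer face empties $G$ within $h + 1$ rounds, with a vertex at BFS distance $i$ removed no later than round $i + 1$. The reason is topological: a vertex sitting at distance $i$ lies in the interior enclosed by the layers $L_1, \dotsc, L_i$, and by the Jordan curve theorem any path reaching it from the outer face must cross each of these layers, so it cannot surface on the outer boundary until those layers have been peeled away. Granting this, I would invoke the standard bound that every $k$-outerplanar graph has treewidth at most $3k - 1$; with $k = h + 1 \leq d + 1$ this yields a width of at most $3(d+1) - 1 = 3d + 2$. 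Accounting separately for the trivial outermost layer $L_0 = \brc{v_0}$ (a single vertex), or equivalently routing through the radius $\rho \leq d$ which bounds the number of peeling rounds more tightly, brings the width comfortably within the claimed $3d + 1$.

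\textbf{Main obstacle.} The delicate step is the second: BFS distance layers and the ``onion-peeling'' layers of a fixed embedding need not coincide vertex-for-vertex, so the inequality (peeling depth $\leq$ BFS depth $+ 1$) must be proved rather than assumed. I expect to establish it via the nested-separator structure from the first step, turning the intuitive Jordan-curve argument into a rigorous claim that each $L_i$ forms a collection of enclosing cycles in the embedding and that peeling respects this nesting. The remaining work---verifying the $3k - 1$ treewidth bound for $k$-outerplanar graphs and pinning the additive constant down to exactly $3d + 1$---is routine bookkeeping, but it must be carried out carefully, since a coarser argument would deliver only an $\Oh{d}$ bound rather than the stated one.
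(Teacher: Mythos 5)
The first thing to note is that the paper contains no proof of this proposition at all: it is imported verbatim, with a citation, from Robertson and Seymour (Graph Minors III, Theorem~2.7). So your attempt cannot be compared against an in-paper argument and must stand on its own. Your route---BFS layers, then $(h+1)$-outerplanarity, then the treewidth bound for $k$-outerplanar graphs---is a legitimate and standard way to prove bounds of this shape, and the claim you flag as the ``main obstacle'' is in fact true, with a proof cleaner than your Jordan-curve sketch: in a plane graph, deleting a vertex $u$ on the outer face merges every face incident to $u$ into the outer face, so every surviving neighbor of $u$ becomes incident to the outer face; inducting along a shortest path from $v_0$ (chosen on the outer face) shows that a vertex at BFS distance $i$ has peeling depth at most $i + 1$. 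No further topology is needed. One caveat: describing the $3k - 1$ bound for $k$-outerplanar graphs as ``routine bookkeeping'' misjudges it---that is a genuine theorem of Bodlaender, whose proof is comparable in weight to the proposition itself, and it should be invoked by citation, not ``verified.''

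The genuine gap is the endgame. As executed, your bound is $3(h+1) - 1 = 3h + 2 \leq 3d + 2$, one more than claimed, and of your two proposed repairs, one provably fails and the other is only gestured at. The radius repair fails because the radius $\rho$ can equal the diameter $d$ (odd cycles, complete graphs), so routing through $\rho$ still yields only $3d + 2$. The $L_0$ repair does work, but it must be phrased as a deletion argument, not as bookkeeping about layer sizes (the first peeling layer is the whole outer face, not $\brc{v_0}$): delete $v_0$; by the same face-merging fact, every neighbor of $v_0$ lies on the outer face of $G - v_0$, so $G - v_0$ is $h$-outerplanar and $\oper{tw}(G - v_0) \leq 3h - 1$; re-inserting $v_0$ into every bag of a tree decomposition raises the width by at most one, giving $\oper{tw}(G) \leq 3h \leq 3d \leq 3d + 1$. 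Without this step your argument, as written, proves a statement strictly weaker than the one claimed.
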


\begin{proposition}[Elberfeld et al.\cite{EJT2010FOCS}, Lemma III.1]\label{prop:tree_dec}
	Let $G$ be a graph on $n$ vertices with treewidth $k \in \N$. One can compute a tree decomposition of width $4k + 1$ for $G$ such that the decomposition tree is rooted, binary and has depth $\Oh{\log{n}}$. The procedure runs in time $n^{\Oh{k}}$ and uses $\Oh{k \log{n}}$ bits of space.
\end{proposition}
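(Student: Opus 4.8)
The plan is to build the decomposition by recursive balanced-separator splitting and then to observe that the entire recursion can be \emph{replayed} on demand, so that the tree is never materialized and at most one separator plus an $\Oh{\log n}$-bit node address is ever held in memory. The only property of treewidth I use is the standard balanced-separator fact: a graph $G$ of treewidth $k$ admits, for every weighting of its vertices, a separator $S$ with $\abs{S} \le k + 1$ such that each connected component of $G - S$ carries at most half the total weight (some bag of a width-$k$ decomposition witnesses this). Applying it to the weighting that assigns $1$ to every vertex and an extra unit to a distinguished set $W$ yields an $S$, $\abs{S} \le k+1$, for which every component of $G - S$ has at most $\abs{V(G)}/2$ vertices and at most $\abs{W}/2$ vertices of $W$. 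Crucially, since $S$ is a set of size $k+1$, it can be found by brute force: enumerate all $\binom{n}{k+1} = n^{\Oh{k}}$ candidate subsets and, for each, verify the balance conditions from the component sizes of $G - S$.

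\textbf{The recursion.} I would run a procedure on an induced subgraph $G[V']$ together with a boundary set $W \subseteq V'$ (the vertices of $V'$ adjacent to already-removed separator vertices), maintaining the invariant $\abs{W} \le 3(k+1)$. When $\abs{V'}$ drops below a fixed multiple of $k$, emit the single bag $V'$; otherwise pick a balanced $(V', W)$-separator $S$ as above and emit the bag $W \cup S$, whose size is at most $3(k+1) + (k+1) = 4k + 4$, which is exactly where width $4k+1$ comes from. Because every component of $G[V'] - S$ is small in both the vertex count and in $\abs{W}$, the components can be grouped into two blocks $A, B$, each of total size at most $\tfrac{2}{3}\abs{V'}$ and each meeting at most $\tfrac{2}{3}\abs{W}$ boundary vertices; recursing on $A$ and on $B$ (each with updated boundary $(W \cup S)$ intersected with the block's closed neighborhood) makes the decomposition \emph{binary} directly, with both $\abs{V'}$ and $\abs{W}$ shrinking geometrically. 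Hence the decomposition is rooted, binary, of width $\le 4k+1$, and of depth $\Oh{\log n}$.

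\textbf{Space-efficient evaluation.} Identify each node by its address, the sequence of left/right turns from the root, a string of length $\Oh{\log n}$. To emit the bag at a given address I replay the recursion from the root along it; the state that must persist is only the current boundary $W$ and separator $S$. Since $V'$ itself may be large, I never store it: the predicate ``vertex $u$ lies in $V'$ at this node'' is decided by replaying the recursion from the root and, at each ancestor, testing on which side of that ancestor's separator $u$ falls. Every such test reduces to undirected connectivity (component sizes, and which block a vertex joins), computable in $\Oh{\log n}$ space by Reingold's theorem~\cite{Rei2008JACM}, while candidate separators are enumerated in $n^{\Oh{k}}$ time and $\Oh{k \log n}$ space. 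Iterating over all $\Oh{n}$ addresses in order and recovering each bag and each parent/child edge from addresses alone then streams out the whole decomposition in total time $n^{\Oh{k}}$.

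\textbf{Main obstacle.} The entire difficulty is the space bound. A direct recursion keeps one separator per level of the stack, i.e.\ $\Oh{k\log n}$ bits over $\Oh{\log n}$ levels, giving $\Oh{k \log^2 n}$ bits and missing the target. The crux is therefore to trade storage for recomputation: the recursion stack is never kept, and each bag — and each vertex's side at each ancestor — is reconstructed from the $\Oh{\log n}$-bit address while only $\Oh{1}$ separators and $\Oh{\log n}$-bit indices are live at any instant, which is what pins the working space at $\Oh{k \log n}$. The second delicate point is the boundary bookkeeping of the preceding step: verifying that the two-way grouping can always balance $\abs{V'}$ and $\abs{W}$ simultaneously (a vector bin-packing argument using that each component is small in both measures) so that every emitted bag stays within $4k+2$ vertices and the depth stays $\Oh{\log n}$.
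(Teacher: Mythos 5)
First, note that the paper itself does not prove this statement: it is imported verbatim as a citation (Elberfeld et al., Lemma~III.1) and used as a black box in Lemma~\ref{lemm:plan_bd_width}, so there is no in-paper proof to compare against. Judged on its own, your sketch follows the natural recursive balanced-separator route, but it has two genuine gaps, both of which you flag yourself as ``delicate'' and then do not close.

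The first gap is the simultaneous balance claim. Giving every vertex weight $1$ and every vertex of $W$ an extra unit, a weight-balanced separator only guarantees that each component of $G[V'] - S$ carries at most half of the \emph{combined} weight $\abs{V'} + \abs{W}$; it does \emph{not} follow that each component has at most $\abs{V'}/2$ vertices \emph{and} at most $\abs{W}/2$ vertices of $W$. When $\abs{W} \ll \abs{V'}$, a component may contain nearly all of $W$ while staying under half the combined weight. Since the two-block grouping (and with it both the depth bound and the invariant $\abs{W} \leq 3(k+1)$) rests on every component being small in \emph{both} measures, the recursion as described can fail to make progress on $\abs{W}$. The standard repairs --- balancing only $W$ at each step and arguing that $\abs{V'}$ shrinks every $\Oh{1}$ levels, or alternating the weighting between levels --- change the constants and must actually be carried out; relatedly, your bag bound is $3(k+1) + (k+1) = 4k + 4$, i.e.\ width $4k+3$, not the claimed $4k+1$ (and your final paragraph's ``$4k+2$ vertices'' contradicts your own arithmetic).

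The second gap is the space bound, which is the entire content of the proposition. To decide ``$u \in V'$ at node $a$'' you must know the separators at all ancestors of $a$; but recomputing the separator at depth $i$ means enumerating candidate $(k+1)$-subsets and verifying balance conditions, which itself requires membership oracles for $V'_i$ and $W_i$, i.e.\ the separators at depths less than $i$. This is a depth-$\Oh{\log n}$ composition of procedures each of whose working state (the candidate separator currently under test) occupies $\thet{k \log n}$ bits, and replaying instead of storing does not remove that state: the virtual recursion still holds one candidate per level, giving $\Oh{k \log^2 n}$ bits --- exactly the bound you say must be avoided. The assertion that ``only $\Oh{1}$ separators are live at any instant'' is the crux, and it is asserted rather than proved; making it true requires a mechanism that breaks the nested dependence (canonical separators recoverable without re-running the ancestor recursion, which is essentially what the machinery of Elberfeld et al.\ supplies), not on-demand replay alone.
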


\setlength{\textfloatsep}{0.1cm}
\begin{procedure}[h]
\KwIn{$(T, v, \mathcal{B}, \mathcal{P} \psi)$} 

	$max \gets 0, \phi_{max} \gets \emptyset$\;
	\eIf{$v$ has no children in $T$}{
		store $V_v$, the set of variables in $B_v$ that extend $\psi$
		let $\mathcal{A}_v$ be the set of assignments for variables \;
		\ForEach{$\phi \in \mathcal{A}_v$}{
			determine $val$, the number of clauses appearing in $B_v$ that $\phi$ satisfies\;
			\If{$val > max$}{$max \gets val,\ \phi_{max} \gets \phi$}
		}
		\Return{$(max, \phi_{max})$}
	}{
		determine the left child $v_l$ and the right child $v_r$ of $v$ in $T$ if they exist\; 
		store $V_v$, the set of variables in $B_v$, and those in $B_{v_l}$ and $B_{v_r}$ adjacent to clause variables in $B_v$\;
		let $\mathcal{A}_v$ be the set of assignments for $V_v$ that extend $\psi$\;
		\ForEach{$\phi \in \mathcal{A}_v$}{
			$(val_l , \phi_l) \gets \algoi{BdTWMaxSAT}{T, v_l, \mathcal{B}, \phi}$\;
			$(val_r , \phi_r) \gets \algoi{BdTWMaxSAT}{T, v_r, \mathcal{B}, \phi}$\;
			\If{$val_l + val_r > max$}{$max \gets val_l + val_r,\ \phi_{max} \gets \phi_l \cup \phi_r$}
		}
		\Return{$(max, \phi_{max})$}
	}

\caption{BdTWMaxSAT(): find an optimal assignment}\label{proc:BdTWMaxSAT}
\end{procedure}
\setlength{\floatsep}{0.1cm}

We now show how one can solve $\pPlMaxrSAT{}$ exactly on formulas with incidence graphs of bounded diameter.
\begin{lemma}\label{lemm:plan_bd_width}
    Let $F$ be an $r$-CNF formula with $n$ variables that has a planar incidence graph with diameter $k \in \N$. One can compute an assignment for $F$ satisfying the maximum number of clauses in time $n^{\Oh{rk}}$ using $\Oh{rk \log^2{n}}$ bits of space.
\end{lemma}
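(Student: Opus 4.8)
The plan is to exploit the bounded diameter to obtain bounded treewidth, build a shallow balanced tree decomposition, and run the dynamic program of Procedure~\ref{proc:BdTWMaxSAT} over it, recomputing rather than storing intermediate data so that the space stays sublinear.

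First I would bound the treewidth: since the incidence graph $G_F$ is planar with diameter $k$, Proposition~\ref{prop:local_tw} gives treewidth at most $3k + 1 = \Oh{k}$. Next I would apply Proposition~\ref{prop:tree_dec} to $G_F$ to obtain a rooted binary tree decomposition $(T, \mathcal{B})$ of width $w = 4(3k + 1) + 1 = \Oh{k}$ whose tree $T$ has depth $\Oh{\log{n}}$ (the incidence graph has polynomially many vertices, so the logarithm of its size is $\Oh{\log{n}}$); this costs $n^{\Oh{k}}$ time and $\Oh{k \log{n}}$ space, and in keeping with the model $(T, \mathcal{B})$ is recomputed on demand rather than stored. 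Finally I would invoke Procedure~\ref{proc:BdTWMaxSAT} at the root of $T$, whose returned optimum equals the maximum number of simultaneously satisfiable clauses. For correctness the essential facts are that each clause is charged to a single canonical bag and that, when it is charged, all of its literals have been assigned; this is exactly where the factor $r$ enters, since a bag has at most $w + 1 = \Oh{k}$ vertices, hence at most $\Oh{k}$ clause vertices, each pulling in at most $r$ incident variable vertices, so the set $V_v$ assigned at any node has size $\Oh{rk}$ and the loop enumerates at most $2^{\Oh{rk}}$ assignments.

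The resource bounds then follow, and the space bound is where I expect the real work. For time, the recursion has depth $\Oh{\log{n}}$ and each node spawns $2^{\Oh{rk}}$ assignments times two children, so the number of invocations is $\brn{2^{\Oh{rk}}}^{\Oh{\log{n}}} = n^{\Oh{rk}}$, and with $n^{\Oh{1}}$ work per node the total is $n^{\Oh{rk}}$, dominating the decomposition's $n^{\Oh{k}}$. For space, the only live data in the recomputation model are the stack frames along a single root-to-node path: each frame stores $V_v$ together with the assignment it is currently trying, using $\Oh{rk \log{n}}$ bits, and the depth is $\Oh{\log{n}}$, for $\Oh{rk \log^2{n}}$ bits overall. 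The main obstacle I foresee is emitting the optimal assignment within this budget: a materialised union $\phi_l \cup \phi_r$ near the root would need $\Om{n}$ bits, so the assignment cannot be returned by value but must be written to the output stream, with only the scalar optima flowing upward through the recursion and each variable's value recomputed on the fly from the stored maxima. Verifying that this streamed reconstruction never exceeds $\Oh{rk \log^2{n}}$ bits is the crux of the argument.
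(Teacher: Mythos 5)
Your proposal follows exactly the paper's route: Proposition~\ref{prop:local_tw} to bound the treewidth by $3k+1$, Proposition~\ref{prop:tree_dec} for the width-$\Oh{k}$, depth-$\Oh{\log{n}}$ binary decomposition, and then Procedure~\ref{proc:BdTWMaxSAT} with the same accounting ($\abs{V_v} = \Oh{rk}$ per frame, recursion depth $\Oh{\log{n}}$, giving $n^{\Oh{rk}}$ time and $\Oh{rk\log^2{n}}$ space). The two subtleties you single out---charging each clause to a canonical bag to avoid double counting, and streaming the optimal assignment instead of returning $\phi_l \cup \phi_r$ by value (which would indeed cost $\Om{n}$ bits near the root)---are genuine refinements that the paper's own pseudocode and analysis gloss over, so your reading is if anything more careful than the original.
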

\begin{proof}
	Let $G$ be the incidence graph of $F$. Since the diameter of $G$ is $k$, its treewidth is at most $3k + 1$ (Proposition~\ref{prop:local_tw}). Consider a tree decomposition $(T, \mathcal{B})$ for $G$ computed by the procedure of Proposition~\ref{prop:tree_dec}. $T$ is the underlying tree (rooted at a vertex $v_r \in \V{T}$) and $\mathcal{B} = \setb{B_v}{v \in \V{T}}$ is the set of bags in the decomposition. By the proposition, the depth of $T$ is $\Oh{\log{n}}$ and its width is at most $4 \cdot (3k + 1) + 1 = \Oh{k}$, i.e.\ $\abs{B_v} = \Oh{k}$ for all $v \in \V{T}$.

	For each $v \in \V{T}$, let $F_v$ be the subformula of $F$ consisting of all clauses appearing in bags of the subtree of $T$ rooted at $v$. Let $V_v$ be the set of variables in $B_v$, and those in the bags of $v$'s children (if they exist) that are adjacent to variables in $B_v$.

	In what follows, we prove that \hyperref[proc:BdTWMaxSAT]{$\algo{BdTWMaxSAT}$}$(T, v_r, \mathcal{B}, \emptyset)$ ($\emptyset$ denotes the empty assignment) computes an assignment for $F$ satisfying the maximum number of clauses. We momentarily assume constant-time access to $G$ and $(T, \mathcal{B})$.
	
	Assume for induction that for any $v \in V{T}$, any assignment $\phi$ for $V_v$ and any child $v_c$ of $v$, that $\algoi{BdTWMaxSAT}{T, v_c, \mathcal{B}, \phi}$ returns an assignment for $F_{v_c}$ which extends $\phi$ and satisfies the maximum number of clauses in $F_{v_c}$ among all such assignments. Now consider a procedure call $\algoi{BdTWMaxSAT}{T, v, \mathcal{B}, \psi}$. The procedure first determines if $v$ has any children. If it does not, then it iterates over all assignments for variables in $B_v$ that extend $\phi$, finds one that satisfies the maximum number of clauses in $F_v$ and returns it. Thus, the procedure is correct in the base case. Since $\abs{V_v} \leq \abs{B_v} = \Oh{k}$, the number of such assignments is $2^{\Oh{k}}$. The call stack stores $\psi$ and $V_v$, so the assignments can be enumerated in time $2^{\Oh{k}} \cdot n^{\Oh{1}}$ using $\Oh{rk \log{n}}$ bits of extra space. Thus, this section of the procedure runs in time $2^{\Oh{k}} \cdot n^{\Oh{1}}$ and uses $\Oh{k \log{n}}$ bits of space. 

	In the other case, i.e.\ $v$ has children, the procedure determines the left and right children of $v$ by scanning $(T, \mathcal{B})$. It then stores $V_v$, which is polynomial-time and uses $\Oh{rk \log{n}}$ bits of space, since each clause in $B_v$ has at most $r$ literals and thus $\abs{V_v} \leq r \cdot \abs{B_v} = \Oh{rk \log{n}}$. The loop iterates over the set $\mathcal{A}_v$ of assignments for $V_v$ that extend $\psi$. The assignments can be enumerated (since $\psi$ and $V_v$ are stored on the call stack) in time $2^{\Oh{rk}} \cdot n^{\Oh{1}}$ using $\Oh{rk \log{n}}$ bits of extra space. Next, the procedure calls itself recursively and stores the tuples returned. Because of the inductive assumption, $\phi_l$ (resp.\ $\phi_r$) extends $\phi$ and satisfies the maximum number of clauses in $F_{v_l}$ (resp.\ $F_{v_r}$) among all such assignments.

	Observe that because $(T, \mathcal{B})$ is a tree decomposition, the variables outside of $V_v$ that $\phi_l$ sets are distinct from the variables outside of $V_v$ that $\phi_r$ sets. Thus, $\phi_l$ and $\phi_r$ do not conflict with each other. In the loop, the procedure finds an extension $\phi$ of $\psi$ such that its extensions $\phi^{max}_l$ and $\phi^{max}_r$, respectively, satisfy the maximum number of clauses in $F_l$ and $F_r$. Overall, $\phi$ is an extension of $\psi$ which satisfies the maximum possible number of clauses in $F_v$. This proves the inductive claim, and thus the procedure is correct.

	We now prove the resource bounds of the procedure (assuming constant-time access to $G$ and $(T, \mathcal{B})$). Observe that in each recursive call, the the individual steps $\Oh{rk \log{n}}$ bits of space and the loops also use $\Oh{rk \log{n}}$ bits of space. Since $T$ has depth $\Oh{\log{n}}$, the depth of the recursion tree is also $\Oh{\log{n}}$, and therefore the call $\algoi{BdTWMaxSAT}{T, r, \mathcal{B}, \emptyset}$ uses a total of $\Oh{rk \log^2{n}}$ bits of space.

	Outside of the recursive calls, the individual steps of the procedure are polynomial-time and the total running time for the other operations in the loops is $2^{\Oh{rk}} \cdot n^{\Oh{1}}$. Thus, if the recursive calls take time $T$, the overall running time of the procedure is $2^{\Oh{rk}} \cdot 2T + 2^{\Oh{rk}} \cdot n^{\Oh{1}}$. Since the depth of the recursion tree is $\Oh{\log{n}}$, this expression solves to $n^{\Oh{rk}}$.

	Now consider the overheads for computing $G$ and $(T, \mathcal{B})$. $G$ is clearly computable in polynomial time and logarithmic space and by Proposition~\ref{prop:tree_dec}, $(T, \mathcal{B})$ is computable in time $n^{\Oh{k}}$ using $\Oh{k \log{n}}$ bits of space. The real resource costs of $\algoi{BdTWMaxSAT}{T, v_r, \mathcal{B}, \emptyset}$ are therefore $n^{\Oh{rk}} \cdot n^{\Oh{k}} = n^{\Oh{rk}}$ time and $\Oh{rk \log^2{n}} + \Oh{k \log{n}} = \Oh{rk \log^2{n}}$ bits of space.
	\qed
\end{proof}

The next theorem combines the previous results to devise a sublinear-space PTAS for $\pPlMaxrSAT{}$.
\begin{theorem}\label{thrm:planar}
    For any $0 < \epsilon < 1$, one can compute $(1 - \epsilon)$-approximate optimal assignments for $\pPlMaxrSAT{}$ in time $n^{\Oh{r / \epsilon}}$ using $\max\brc{\sqrt{n} \log{n}, (r / \epsilon) \log^2{n}}$ bits of space.
\end{theorem}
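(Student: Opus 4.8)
The plan is to combine the partitioning lemma (Lemma~\ref{lemm:plan_partition}) with the exact bounded-diameter solver (Lemma~\ref{lemm:plan_bd_width}), following the standard Baker-style layering argument. First I would set $k = \ceil{1 / \epsilon}$ (or $k = \ceil{2 / \epsilon}$, to be tuned below). Invoking Lemma~\ref{lemm:plan_partition} with this value of $k$, I obtain a sequence $F_1, \dotsc, F_l$ of variable-disjoint subformulas whose incidence graphs each have diameter at most $k$, and which together contain at least $(1 - 1/k) m$ of the clauses of $F$. The output of that lemma is a stream $S_{F^+}$ from which each $F_i$ can be read off as a block of consecutive BFS levels; this stream is recomputed on demand rather than stored, costing $n^{\Oh{1}}$ time and $\Oh{\sqrt{n} \log{n}}$ bits of space per access per the overhead conventions of Section~\ref{ssct:ts_overhead}.

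Next I would run the exact solver \hyperref[proc:BdTWMaxSAT]{$\algo{BdTWMaxSAT}$} of Lemma~\ref{lemm:plan_bd_width} on each $F_i$ in turn. Because each $F_i$ has diameter at most $k$, a single invocation costs $n^{\Oh{rk}} = n^{\Oh{r/\epsilon}}$ time and $\Oh{rk \log^2{n}} = \Oh{(r/\epsilon) \log^2{n}}$ bits of space. Processing the $F_i$ sequentially, I would stream out, for each $i$, the optimal partial assignment $\phi_i$ on the variables of $F_i$; since the $F_i$ are pairwise variable-disjoint (condition~2 of Lemma~\ref{lemm:plan_partition}), these partial assignments never conflict and their union is a well-defined assignment on the variables touched by $F_1, \dotsc, F_l$. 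For any variable not appearing in any $F_i$ I would output an arbitrary fixed value (say $1$). The total space is the maximum of the partitioning overhead and the solver overhead, namely $\max\brc{\sqrt{n} \log{n}, (r/\epsilon) \log^2{n}}$ bits; the total time is the product of the (constantly many) overheads, which collapses to $n^{\Oh{r/\epsilon}}$.

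It remains to verify the approximation guarantee. Let $\mathrm{OPT}$ be the maximum number of simultaneously satisfiable clauses in $F$, so $\mathrm{OPT} \leq m$. The union assignment $\phi = \bigcup_i \phi_i$ satisfies, in each block $F_i$, at least as many clauses as the restriction to $F_i$ of any fixed global optimal assignment $\phi^*$; summing over the variable-disjoint blocks and using that the $F_i$ collectively cover all but a $(1/k)$-fraction of the clauses, the number of clauses of $F$ satisfied by $\phi$ is at least $\brn{\sum_i \abs{\phi^* \text{ satisfies in } F_i}} \geq (\text{clauses of } \phi^* \text{ lying in some } F_i) \geq \mathrm{OPT} - m/k \geq (1 - 1/k)\,\mathrm{OPT}$, where the last step uses $m \leq \brn{2}\mathrm{OPT}$ together with an appropriate choice of $k$. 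Choosing $k = \ceil{2/\epsilon}$ makes this at least $(1 - \epsilon)\,\mathrm{OPT}$.

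The main obstacle is this final accounting step, and in particular the need to relate the fraction of \emph{clauses} dropped by the partition to the fraction of $\mathrm{OPT}$ lost. Two subtleties must be handled carefully: first, the clauses of $\phi^*$ that are satisfied but fall outside every $F_i$ are not recovered by $\phi$, and bounding this loss requires the Folklore fact that $\mathrm{OPT} \geq m/2$ (so that a $(1/k)$-fraction of $m$ is at most a $(2/k)$-fraction of $\mathrm{OPT}$), which drives the choice $k = \Theta(1/\epsilon)$; second, one must confirm that restricting $\phi^*$ to the variables of a single block $F_i$ yields a valid assignment there whose satisfied-clause count is a lower bound for $\phi_i$'s optimum — this is immediate from variable-disjointness but deserves an explicit sentence. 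Once these are pinned down, the resource bounds follow mechanically from the overhead-composition rules, and no further computation is needed.
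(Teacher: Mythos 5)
Your proposal follows the same overall route as the paper's proof: invoke Lemma~\ref{lemm:plan_partition} to split $F$ into variable-disjoint, bounded-diameter pieces $F_1, \dotsc, F_l$, solve each piece exactly via Lemma~\ref{lemm:plan_bd_width}, stream out the union of the piecewise-optimal assignments (well-defined by variable-disjointness), and compose the overheads to get $n^{\Oh{r/\epsilon}}$ time and $\max\brc{\sqrt{n}\log{n},\, (r/\epsilon)\log^2{n}}$ bits of space. Where you genuinely differ is in the final approximation accounting, and your version is the more rigorous one. The paper takes $k = \ceil{1/\epsilon}$ and asserts that the output assignment ``satisfies at least $(1-\epsilon)m$ clauses''; this does not follow, since the pieces are solved \emph{optimally} but need not be \emph{satisfiable}, so the output is only guaranteed to satisfy $\sum_i \mathrm{OPT}(F_i) \geq \mathrm{OPT}(F) - m/k$ clauses. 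Converting that additive loss of $m/k$ clauses into a multiplicative $(1-\epsilon)$ guarantee requires exactly the step you flag as the main obstacle: the folklore bound $\mathrm{OPT} \geq m/2$, which gives $\mathrm{OPT} - m/k \geq (1 - 2/k)\,\mathrm{OPT}$, so that $k = \ceil{2/\epsilon}$ suffices while leaving the stated resource bounds intact (since $k = \thet{1/\epsilon}$ either way). One small slip in your chain of inequalities: you write $\mathrm{OPT} - m/k \geq (1 - 1/k)\,\mathrm{OPT}$, but with $m \leq 2\,\mathrm{OPT}$ the correct bound is $(1 - 2/k)\,\mathrm{OPT}$ --- your closing paragraph and your choice $k = \ceil{2/\epsilon}$ already reflect the correct constant, so nothing breaks. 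In short: same algorithm and same resource analysis as the paper, but your loss-accounting against $\mathrm{OPT}$ rather than against $m$ is a needed repair of the paper's final step rather than a redundant elaboration of it.
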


\begin{proof}
	Consider the following algorithm. Using the procedure of Lemma~\ref{lemm:plan_partition} with $k = \ceil{1 / \epsilon}$, partition $F$ into subformulas $F_1, \dotsc, F_l$. Then for each $i \in [l]$, use the procedure of Lemma~\ref{lemm:plan_bd_width}, compute an exact solution for $F_i$ and output an assignment. In the end, output assignments $x = 0$ for all variables $x$ not appearing in $F_1, \dotsc, F_l$.
	
	Observe that since the partitioning procedure outputs the subformulas as a stream $S_F = F_1, \dotsc, F_l$, each access to $F_i$ costs a single pass over $S_F$, which adds only an $n^{\Oh{1}}$-time, $\Oh{\log{n}}$-space overhead. By Lemma~\ref{lemm:plan_partition}, the partitioning procedure runs in time $n^{\Oh{1}}$ and uses $\Oh{\sqrt{n} \log{n}}$ bits of space. Combining the overhead for access to $F_i$ and the resource bounds from Lemma~\ref{lemm:plan_bd_width}, solving $F_i$ exactly takes time $n^{\Oh{1}} \cdot n^{\Oh{rk}} = n^{\Oh{r / \epsilon}}$ (since $k = \ceil{1 / \epsilon}$) and uses $\Oh{\log{n}} + \Oh{\sqrt{n} \log{n}} + \Oh{rk\log^2{n}} = \max\brc{\sqrt{n} \log{n}, (r / \epsilon) \log^2{n}}$ bits of space. Finally, each $x = 0$ assignment for a variable not appearing in $F_1, \dotsc, F_l$ costs a single pass over $S_F$. It follows that the total resource costs are $n^{\Oh{r / \epsilon}}$ time and $\max\brc{\sqrt{n} \log{n}, (r / \epsilon) \log^2{n}}$ bits of space.

	We now prove the approximation bound. Let $m$ be the number of clauses in $F$. Observe that Lemma~\ref{lemm:plan_partition} guarantees any two subformulas $F_i$ and $F_j$ ($i , j \in l$ with $i \neq j$) have no variables in common, and the subformulas together contain at least $(1 - 1 / k) m \geq (1 - \epsilon) m$ clauses of $F$. Thus, the assignment produced is valid and satisfies at least $(1 - \epsilon) m$ clauses, i.e.\ it is a $(1 - \epsilon)$ approximate optimal $\pPlMaxrSAT{}$ assignment for $F$.
	\qed
\end{proof}

\bibliography{external/references}
\end{document}